\documentclass[]{interact}

\usepackage[natbibapa,nodoi]{apacite}
\setlength\bibhang{12pt}

\usepackage[utf8]{inputenc}
\usepackage{amsmath}
\usepackage{amsfonts}
\usepackage{amssymb}
\usepackage{amsthm}
\usepackage{graphicx}
\usepackage{pifont}
\usepackage{enumitem}
\usepackage{float}
\usepackage{color}
\usepackage{hyperref}

\usepackage{ragged2e}

\newtheorem{theorem}{Theorem}
\newtheorem{corollary}[theorem]{Corollary}
\newtheorem{definition}[theorem]{Definition}
\newtheorem{example}[theorem]{Example}
\newtheorem{lemma}[theorem]{Lemma}

\newtheorem{remark}[theorem]{Remark}

\providecommand{\numberTblEq}[1]{\refstepcounter{tblEqCounter}\label{#1}\thetag{\thetblEqCounter}}

\setlength{\unitlength}{1cm}

\def\sign{\hskip2pt{\rm sign}\hskip0pt}

\def\Red#1{\textcolor{red}{#1}}
\def\Blue#1{\textcolor{blue}{#1}}

\pretolerance=8000
\tolerance=8000

\begin{document}

\title{On finite-time and fixed-time consensus algorithms for dynamic networks switching among disconnected digraphs\thanks{Partially supported by the Czech Science Foundation through the research grant No. 17-04682S.}
\thanks{\Red{This is the accepted version of the manuscript: Gómez-Gutiérrez, D., Vázquez, C. R., \v{C}elikovský, S., Sánchez-Torres, J. D., and Ruiz-León, J. (2020). On finite-time and fixed-time consensus algorithms for dynamic networks switching among disconnected digraphs. International Journal of Control, 93(9), 2120-2134. \textbf{Please cite the publisher's version}. For the publisher's version and full citation details see: \url{https://doi.org/10.1080/00207179.2018.1543896}.} \Blue{The following links provide access, for a limited time, to a free copy of the publisher's version: \href{https://www.tandfonline.com/eprint/FSW8JJRVPHMXJ3XUUXZH/full?target=10.1080/00207179.2018.1543896}{Link 1.} \href{https://www.tandfonline.com/eprint/FsW8JjrVPHmXJ3xuuXZH/full}{Link 2.}
\href{https://www.tandfonline.com/eprint/CHZAMUH3MZUJVRJPTUXN/full?target=10.1080/00207179.2018.1543896}{Link 3.}
}
}
}

\author{
\name{David~Gómez-Gutiérrez\textsuperscript{a,b}, Carlos~Renato~Vázquez\textsuperscript{b}\thanks{CONTACT C.~R. Vázquez: cr.vazquez@itesm.mx}, Sergej~\v{C}elikovsk\'{y}\textsuperscript{c}, Juan Diego~Sánchez-Torres\textsuperscript{d} and Javier~Ruiz-León\textsuperscript{e}}
\affil{\textsuperscript{a}Multi-agent Autonomous Systems Lab, Intel Labs, Intel Tecnología de México, Av. del Bosque 1001, 45019, Zapopan, Jalisco, Mexico; 
\textsuperscript{b}Tecnologico de Monterrey, Escuela de Ingeniería y Ciencias, Av. General Ramón Corona 2514, 45201, Zapopan, Jalisco, Mexico; \textsuperscript{c}The Czech Academy of Sciences, Institute of Information Theory and Automation, Pod vod\'{a}renskou v\v{e}\v{z}\'{\i} 4, 182 08 Prague, Czech Republic, {\tt celikovs@utia.cas.cz}; 
\textsuperscript{d}Research Laboratory on Optimal Design, Devices and Advanced Materials -OPTIMA-, Department of Mathematics and Physics, ITESO, Perif\'erico Sur Manuel G\'omez Mor\'in 8585 C.P. 45604, Tlaquepaque, Jalisco, M\'exico; 
\textsuperscript{e}CINVESTAV Unidad Guadalajara, Av. del Bosque 1145, Zapopan , 45019, Jalisco, Mexico}
}

\maketitle

\begin{abstract}
This paper aims to analyze the stability of a class of consensus algorithms with finite-time or fixed-time convergence for dynamic networks composed of agents with first-order dynamics. In particular, in the analyzed class a single evaluation of a nonlinear function of the consensus error is performed per each node. The classical assumption of switching among connected graphs is dropped here, allowing to represent failures and intermittency in the communications between agents. Thus, conditions to guarantee finite and fixed-time convergence, even while switching among disconnected graphs, are provided. Moreover, the algorithms of the considered class are computationally simpler than previously proposed finite-time consensus algorithms for dynamic networks, which is an essential feature in scenarios with computationally limited nodes and energy efficiency requirements such as in sensor networks. 
Simulations illustrate the performance of the proposed consensus algorithms. In the presented scenarios, results show that the settling time of the considered algorithms grows slower than other consensus algorithms for dynamic networks as the number of nodes increases.
\end{abstract}

\begin{keywords}
Finite-time consensus, Fixed-time consensus, dynamical networks, multi-agent systems, multiple interacting autonomous agents, self-organizing systems
\end{keywords}

\newcounter{tblEqCounter} %create a counter

\section{Introduction}
Inspired by the ability of certain social insects to self-organize and mutually cooperate by relying only on neighbor-to-neighbor communication, there has been an increasing interest during the last decade in distributed problems to control the behavior of an agent's network by local interactions. Of particular interest is the consensus problem, which deals with allowing a network of agents to agree on a common value for its internal state by using only communication among neighbors (see e.g. \cite{Olfati2004,Cortes2006,Jiang2009,Chen2013,Lewis2014,Cai2012}). Consensus algorithms have application, for instance, in distributed formation control~\citep{Ren2005,Li2013}, distributed resource allocation~\citep{Xu2017,Xu2017b} and multi-agent rendezvous at the globally optimal point~\citep{Adibzadeh2018}.

The are several published works proposing consensus algorithms in which the agents are first-order integrator systems~\citep{Wang2010}, second-order integrator systems~\citep{Guan2012,Tian2018} or high-order integrator dynamics~\citep{Tian2017,Zuo2018}. Some works consider static communication topologies while others consider dynamic topologies, modeling intermittency in the communications, movement of the agents and the switching between different transmission/reception power levels.

Regarding first-order agents, it is known that if the graph topology is strongly connected, then consensus can be achieved by the standard protocol~\citep{Olfati-Saber2007}. Convergence to the average of the agent's initial states is achieved if the network topology is balanced (identical number of in-neighbors and out-neighbors). For unbalanced graphs, the standard algorithm can be modified by adding a surplus dynamic, to still achieve consensus on the average value~\citep{Cai2012}. These algorithms are linear, and thus the convergence is asymptotic.

Nonlinear protocols have been used to achieve finite-time convergence, in particular, binary protocols have been broadly investigated, achieving consensus to the average value~\citep{Sayyaadi2011,Franceschelli2013}, the average-min-max value~\citep{Cortes2006,Li2014}, the median value~\citep{Franceschelli2017}, and the maximum or minimum value~\citep{Liu2015} of the agent's initial conditions. Continuous finite-time protocols have been introduced in~\cite{Hui2008,Wang2010,Shang2012,Zhu2013}. Moreover, in \cite{Parsegov2013,Zuo2014,Zuo2014a,Ning2018} there have been proposed protocols with fixed-time convergence, i.e., there exists a bound for the convergence time that is independent of the initial conditions. Consensus algorithms where the convergence time is set a priori have been introduced in~\cite{Yong2012,Liu2018}, using linear consensus with a time-varying gain; unfortunately, these methods require that all nodes have a common reference-time which is often restrictive. Multiple variations of the consensus problem have been derived recently. For instance, in~\cite{Meng2016} the multi-scale consensus problem has been addressed, where the nodes agree on a common quantity, but each one with its predetermined scale. In that paper, protocols with asymptotic, finite-time and fixed-time convergence were analyzed for the case of static networks. In \cite{Meng2016a}, the signed-average consensus problem is considered, where the nodes converge to values that are equal in magnitude but may be different in sign. For this problem, fixed-time convergent algorithms were proposed for the static network.

It can be demonstrated that some of the previously mentioned algorithms achieve consensus under dynamic networks switching among connected topologies~\citep{Olfati-Saber2007,Cai2014}, while maintaining finite-time convergence~\citep{Franceschelli2013,Wang2010} or fixed-time convergence~\citep{Zuo2014a}. 
Moreover, some protocols have been shown to achieve consensus in dynamic networks composed by disconnected topologies provided that they form a connected graph in a ``joint sense'', for instance, in~\cite{Chen2015} an algorithm with asymptotic convergence is proposed for the event-triggered consensus problem (where the control action is triggered only when an event is satisfied). In~\cite{Lin2012}, the average consensus problem with time-delay is addressed using asymptotic convergent algorithms. In~\cite{Liu2015}, a discontinuous protocol is analyzed using nonsmooth stability theory.

There exist several works regarding consensus for double integrator agents. Frequently, matching disturbances are considered (for instance, \cite{17, 28,32}). Most of the papers propose finite-time protocols~\citep{36,37,Guan2012,32}. Most of the works consider a fix topology (for instance, \cite{17, 28,32,36,37}). From the current literature, only few works consider dynamic topologies~\citep{27,Guan2012}, where additional conditions about the graph connection must hold. On the other hand, the consensus in high-order agents has been recently considered. In general, agents are described as linear systems (e.g., \cite{4,7}), but there are few works in which agents are non-linear \citep{22,24}. Frequently, the information shared by each agent is its output, then, observers are used to estimate the relative errors and thus evaluate the consensus protocol. Generally, the protocol has the form of linear feedback (for instance, \cite{4,7, 13}), in some cases with an adaptive gain, leading to asymptotic convergence, but non-linear protocols have also been applied~\citep{22,24}. In most of the works the topology is fixed (e.g., \cite{4,7,22}), however, some works consider dynamic topologies \citep{13,16,24} but requiring certain restrictions, for instance, that the topology graphs be jointly connected \citep{Cai2014,16,24}.

\subsection{Paper contribution}

In this work, we revisit the consensus problem for first-order agents considering dynamic topologies. The goal is to analyze a class of consensus algorithms from a common framework, rather than to propose a particular protocol. In the analyzed consensus class, each agent applies a protocol defined as a nonlinear function evaluated on the sum of the errors of all neighboring nodes. By using nonsmooth stability analysis and results from finite-time stability and homogeneity theory, conditions for asymptotic, finite-time and fixed-time consensus are derived for this class of protocols. Three cases are considered: when the communication topology is static; when the communication topology switches among connected graphs; and when the communication topology switches among disconnected graphs. We emphasize that, in the literature, continuous algorithms of this class have only been demonstrated to achieve consensus on static networks. In fact, a detailed comparison to the related protocols existing in the literature is presented in Subsection \ref{subsection-comparison}. The efficacy of the analyzed consensus class on dynamic networks is proven in this paper.

Additionally, it is shown through simulations that the settling time of the analyzed class grows slower when the number of nodes increases than with other related algorithms \citep{Wang2010,Shang2012,Zuo2014}. Another advantage of the analyzed class is that they are computationally simpler than existing finite-time and fixed-time algorithms for dynamic networks \citep{Hui2010,Wang2010,Franceschelli2013,Zuo2014}, which is relevant in applications with energy efficiency requirements and limited computing resources.

The rest of the paper is organized as follows: In Section~\ref{Sec.Preliminaries}, mathematical preliminaries on graph theory and finite-time stability are presented. In Section~\ref{Sec.MainResult}, the main result is derived, and illustrative examples are introduced. In Section~\ref{Sec.IllustrativeExample},
a comparison between consensus algorithms for dynamic networks is presented, analyzing how their convergence time varies as the size of the network increases. Finally, the conclusions and future work are presented in Section~\ref{Sec.Conclusions}.

\section{Preliminaries}
\label{Sec.Preliminaries}

\subsection{Graph Theory}
\label{SubSec.GraphTheory}
The following notation and preliminaries on graph theory are taken mainly from~\cite{godsil2001}.

\begin{definition}
A directed graph (also called digraph) $\mathcal{X}$ consists of a vertex set $\mathcal{V}(\mathcal{X})$ and an edge set $\mathcal{E}(\mathcal{X})$ where an edge (also called node) is an ordered pair of distinct vertices of $\mathcal{X}$. Writing $(i,j)$ denotes an edge with direction from vertex $i$ to vertex $j$. The set of in-neighbors of a vertex $i$ in the graph $\mathcal{X}$ is denoted by $\mathcal{N}^{-}_i(\mathcal{X})=\{j\in\mathcal{V}(\mathcal{X}):(j,i)\in \mathcal{E}(\mathcal{X})\}$. A graph is said to be undirected if $(i,j)\in\mathcal{E}(\mathcal{X})$ implies that $(j,i)\in\mathcal{E}(\mathcal{X})$.
\end{definition}

A path from $i$ to $j$ in a digraph is a sequence of edges $(i,k_1)$ $(k_1,k_2)$ $\cdots$ $(k_{n-1},k_{n})$ $(k_n,j)$ starting in node $i$ and ending in node $j$. A digraph is said to be connected if for every pair of vertices $i,j\in\mathcal{V}(\mathcal{X})$ either there is a path from $i$ to $j$ or a path from $j$ to $i$, otherwise it is said to be disconnected. A subgraph of $\mathcal{X}$ is a graph $\mathcal{Y}$ such that $\mathcal{V}(\mathcal{Y})\subseteq\mathcal{V}(\mathcal{X})$ and $\mathcal{E}(\mathcal{Y})\subseteq\mathcal{E}(\mathcal{X})$. A subgraph $\mathcal{Y}$ such that $\mathcal{V}(\mathcal{Y})\subset\mathcal{V}(\mathcal{X})$ or $\mathcal{E}(\mathcal{Y})\subset\mathcal{E}(\mathcal{X})$ is called a proper subgraph of $\mathcal{X}$.
A subgraph $\mathcal{Y}$ of $\mathcal{X}$ is called an induced subgraph if for any two vertices $i,j\in\mathcal{V}(\mathcal{Y})$, there is an edge $(i,j)\in\mathcal{E}(\mathcal{Y})$ if and only if $(i,j)\in\mathcal{E}(\mathcal{X})$.
An induced subgraph $\mathcal{Y}$ of $\mathcal{X}$ that is connected is called maximal if it is not a proper subgraph of another connected subgraph of $\mathcal{X}$.
A connected induced subgraph of $\mathcal{X}$ that is maximal is called a connected component of $\mathcal{X}$. The edge connectivity $\kappa_1(\mathcal{X})$ of $\mathcal{X}$ is the minimum number of edges that are needed to be removed to decrease the number of connected components.

\begin{definition}
A weighted digraph is a digraph together with a weight function $W:\mathcal{E}(\mathcal{X})\rightarrow \mathbb{R}_+$. The adjacency matrix $A=[a_{ij}]\in\mathbb{R}^{n\times n}$ of a graph with $n$ vertices is a square matrix where $a_{ij}$ corresponds to the weight of the edge $(j,i)$ if $j\in\mathcal{N}^{-}_i(\mathcal{X})$ and $a_{ij}=0$ otherwise.
\end{definition}

 For an undirected graph $a_{ij}=a_{ji}$.
The Laplacian of $\mathcal{X}$ is the matrix  $\mathcal{Q}(\mathcal{X})=\Delta-A$ where $\Delta=diag(d_1\cdots,d_n)$ with $d_i=\sum_{j=1}^na_{ij}$. If the graph $\mathcal{X}$ is connected, then the eigenvalue $\lambda_1(\mathcal{Q})=0$ has algebraic multiplicity one with eigenvector $\mathbf{1}=[1\ \cdots\ 1]^T$, i.e. the right annuler of $\mathcal{Q}(\mathcal{X})$ is $\ker \mathcal{Q}(\mathcal{X})=\{x:x_1=\ldots=x_n\}$. The algebraic connectivity of $\mathcal{X}$ is the second smallest eigenvalue of $\mathcal{Q}$, $\lambda_2(\mathcal{Q})$, which is lower or equal to the edge connectivity of $\mathcal{X}$, i.e. $\lambda_2\leq \kappa_1(\mathcal{X})$~\citep{godsil2001}.

\subsection{Finite-time and Fixed-time Stability}

Some results on homogeneity theory~\citep{Hermes1991, Rosier1992, Kawski1995}, finite-time~\citep{Bhat2000, Bhat2005} and fixed-time stability~\citep{Andrieu2008,Polyakov2016}, which will be useful in the exposition later on, are recalled in this subsection. Let us first present some definitions.

\begin{definition} \citep{Perruquetti2008}
Let $f:\mathbb{R}^n\rightarrow\mathbb{R}^n$ be a piecewise continuous function
with $f(0)=0$, $\psi (t,x_{0})$ is said to be a right-maximally defined solution
of
\begin{equation}
\label{Eq.NonLinear}
\dot{x}=-f(x)
\end{equation}
if $\psi (t,x_{0})$ is such that

$$\frac{{ d\psi (t,x_{0})}}{{ d}t}=-f(\psi (t,x_{0})), ~~\psi
(0,x_{0})=x_{0}, \ \ \ \ \forall t \in [t_0, T_{m}(x_{0})), \forall x_{0}\in {\cal D} \setminus \{ 0
\}
$$
where $T_{m}(x_{0})\in (0,\infty) $ is the maximal possible real number with the above property, or plus infinity. Moreover,~\eqref{Eq.NonLinear} is said to have a unique solution in forward time if for any $x_0\in\mathbb{R}^n$ and two right-maximally defined solutions of~\eqref{Eq.NonLinear}, $\psi (t,x_{0})$ and $\phi (t,x_{0})$ defined on $[t_0,T_m^\psi]$ and $[t_0,T_m^\psi]$, respectively, there exists $t_0<T_m(x_0)<\min(T_m^\psi(x_0),T_m^\phi(x_0))$ such that $\psi (t,x_{0})=\phi (t,x_{0})$ for all $t\in[t_0,T_m(x_0))$.
\end{definition}

Solutions to~\eqref{Eq.NonLinear} are understood in the sense of Filippov and are assumed to be unique in forward-time.

\begin{definition}
A switched nonlinear system
\begin{equation}
\label{Eq.SLS}
\dot{x}=-f_{\sigma(t)}(x)
\end{equation}
is defined by the tuple $\langle\mathcal{P},\sigma\rangle$ where $\mathcal{P}=\{f_1,\ldots,f_m\}$ is a family of nonlinear vector fields such that $\dot{x}=-f_k(x)$, $k\in\{1,\ldots,m\}$, has a unique solution in forward time (understood in the sense of Filippov) and $\sigma:\mathbb{R}_+\rightarrow\{1,\ldots,m\}$ is the switching signal defining the active vector field, such that $f_{\sigma(t)}(x)=f_k(x)$ whenever $\sigma(t)=k$, with the property that only a finite number of switchings occur in any finite interval, i.e. Zeno behavior is excluded.
\end{definition}

The solution $\psi (t,x_{0})$ of~\eqref{Eq.SLS} is absolutely continuous and it is such that, if $\sigma(t)=k$, $\forall t\in[t_i,t_j)$ then $\psi (t,x_0)=\psi_k(t-t_i,x(t_i))$, $\forall t\in[t_i,t_j)$ where $\psi_k(t-t_i,x(t_i))$ is the unique solution in forward-time of $\dot{y}(\hat{t})=-f_k(y(\hat{t}))$ with $\hat{t}=t-t_i$ and initial condition $y(\hat{t}_0)=x(t_i)$.

\begin{definition}
\citep{Bhat2005}
The origin of~\eqref{Eq.SLS} is called finite-time convergent if there exists an open neighborhood $\mathcal{M} \subseteq \mathbb{R}^n$ around the origin and a function $T: \mathcal{M}\setminus \{0\} \rightarrow (0, \infty)$, called the settling-time function, such that for every $x_0\in \mathcal{M}\setminus \{0\}$, the solution $\psi (t,x_{0})$ is defined on $[0,T(x_0))$, $\psi (t,x_{0})\in \mathcal{M}\setminus \{0\}$ for all $t\in[0,T(x_0))$ and $\lim_{t \to T(x_{0})}\psi (t,x_{0})=0$. Because of the uniqueness of $\psi(t,x_{0})$, it follows that $T (x_0) = \mathrm{min}\{t \in \mathbb{R}_+| \psi(t, x_0) = 0\}$. Furthermore, the origin is said to be finite-time stable if it is stable and finite-time convergent. Similarly, the origin is said to be globally finite-time stable if it is finite-time stable with $\mathcal{M} =\mathbb{R}^n$.\\
The origin is called fixed-time convergent for~\eqref{Eq.NonLinear} if it is finite-time convergent and $\forall x_0\in\mathbb{R}^n$ the settling time $T(x_0)$ is bounded by some $T_{max}>0$. Furthermore, the origin is said to be fixed-time stable if it is stable and fixed-time convergent.
\end{definition}

The following definitions introduce the concept of homogeneity in functions and vector fields, which will be used for finite-time and fixed-time stability analysis.

\begin{definition}
\label{Def.Homogeneous}
\citep{Bhat2005}
A function $g:\mathbb{R}^n\rightarrow\mathbb{R}$ is called homogeneous of degree $l$ with respect to the ``standard dilation" $\Delta_\lambda(x)=\lambda x$ if and only if
 $$g(\lambda x)=\lambda^l g(x)$$
for all $\lambda>0$.\\
A vector field $f(x)$, where $x\in\mathbb{R}^n$, is homogeneous of degree $d$ with respect to the standard dilation if $$f(x)=\lambda^{-(d+1)}f(\lambda x).$$ \end{definition}

\begin{definition}
\label{Def:Bilimit}
\citep{Andrieu2008,Polyakov2016}
A function $g:\mathbb{R}^n\rightarrow\mathbb{R}$, such that $g(0)=0$, is said to be homogeneous in the $\lambda_0-$limit with degree $d_{\lambda_0}$ if the function $g_{\lambda_0}:\mathbb{R}^n\rightarrow \mathbb{R}$, defined as
$$
g_{\lambda_0}(x)=\lim_{\lambda\rightarrow\lambda_0}\lambda^{-d_{\lambda_0}}g(\lambda x),
$$
is homogeneous of degree $d_{\lambda_0}$ with respect to the standard dilation.\\
A vector field $f:\mathbb{R}^n\rightarrow\mathbb{R}^n$ is said to be homogeneous in the $\lambda_0-$limit with degree $d_{\lambda_0}$ if the vector field $f_{\lambda_0}:\mathbb{R}^n\rightarrow \mathbb{R}^n$, defined as
\begin{equation}
\label{Eq:Hom}
f_{\lambda_0}(x)=\lim_{\lambda\rightarrow\lambda_0}\lambda^{-(d_{\lambda_0}+1)}f(\lambda x),
\end{equation}
is homogeneous of degree $d_{\lambda_0}$ with respect to the standard dilation.
\end{definition}

The following results provide sufficient conditions for finite-time stability and fixed-time stability, respectively.

\begin{theorem}
\label{Th.FTStability}
\citep[Theorem~7.1]{Bhat2005} Let $f(x)$, with $x\in\mathbb{R}^n$, be an homogeneous vector field of degree $q$ with respect to the standard dilation. %(\ref{Eq.dilation}).
Then the origin of $\dot{x}=-f(x)$ is globally finite-time stable if and only if it is globally asymptotically stable and $d<0$, where $d$ is the homogeneity degree of $f(x)$.
\end{theorem}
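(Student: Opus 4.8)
The statement is a biconditional, so the plan is to treat the two implications separately, with homogeneous scaling of trajectories as the common workhorse; throughout I identify the two symbols for the homogeneity degree, writing $d=q$. The single fact I would extract first is the scaling law for solutions: since $f$ is homogeneous of degree $d$, i.e. $f(\lambda x)=\lambda^{d+1}f(x)$, a direct computation shows that if $\psi(t,x_0)$ solves $\dot x=-f(x)$ then so does $\lambda\,\psi(\lambda^{d}t,x_0)$, and this solution starts at $\lambda x_0$. Forward uniqueness (assumed in the preliminaries) then gives
\[
\psi(t,\lambda x_0)=\lambda\,\psi(\lambda^{d}t,x_0),\qquad \lambda>0.
\]

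For the ``only if'' direction, global finite-time stability is by definition stability together with finite-time convergence, so global asymptotic stability is immediate. To force $d<0$, I would read off from the scaling law the corresponding scaling of the settling-time function, $T(\lambda x_0)=\lambda^{-d}T(x_0)$, obtained by noting that $\psi(t,\lambda x_0)=0$ exactly when $\lambda^{d}t\ge T(x_0)$. I would then invoke the property---standard for finite-time-stable equilibria---that the settling-time function is continuous at the origin, so that $T(\lambda x_0)\to 0$ as $\lambda\to 0^+$. If $d\ge 0$ this fails: for $d=0$ the quantity $T(\lambda x_0)=T(x_0)$ is a positive constant along the ray, and for $d>0$ it blows up. Hence $d<0$.

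For the ``if'' direction, I would appeal to the converse Lyapunov theorem for homogeneous asymptotically stable systems (Rosier), which supplies a smooth, positive-definite $V$, homogeneous of some degree $k>0$ that I am free to take large, with $\dot V(x)=\nabla V(x)\cdot(-f(x))$ negative definite. Because $\nabla V$ is homogeneous of degree $k-1$ and $f$ of degree $d+1$, the derivative $\dot V$ is homogeneous of degree $k+d$. Evaluating on the compact level set $\{V=1\}$, where $\dot V$ attains a negative maximum $-c<0$, and rescaling a general point onto this set via the homogeneity of $V$, yields the differential inequality
\[
\dot V(x)\le -c\,V(x)^{(k+d)/k}.
\]
Since $d<0$ and $k>0$ the exponent $\alpha=(k+d)/k$ satisfies $\alpha<1$, so the comparison principle applied to $\dot v=-c\,v^{\alpha}$ shows that $V$, and therefore $x$, reaches the origin in finite time, with the explicit bound $T(x_0)\le V(x_0)^{1-\alpha}/\big(c(1-\alpha)\big)$; global asymptotic stability already supplies Lyapunov stability, completing finite-time stability.

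The step I expect to be delicate is the converse Lyapunov construction in the ``if'' direction: producing a homogeneous Lyapunov function of prescribed positive degree for a merely asymptotically stable homogeneous field is exactly Rosier's theorem, and everything downstream (the homogeneity bookkeeping for $\dot V$ and the comparison argument) is routine once it is in hand. On the ``only if'' side, the subtle point is not the algebraic scaling but the continuity of the settling-time function at the origin, which is precisely where stability, as opposed to mere convergence, is genuinely used.
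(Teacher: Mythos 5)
The paper does not actually prove this statement --- it is imported verbatim from Bhat and Bernstein's Theorem~7.1 as a preliminary --- so your attempt has to be judged against the standard proof in that reference. Your sufficiency direction reproduces that proof correctly: the trajectory-scaling identity $\psi(t,\lambda x_0)=\lambda\,\psi(\lambda^{d}t,x_0)$, Rosier's homogeneous converse Lyapunov theorem with a degree $k>-d$, the resulting inequality $\dot V\le -c\,V^{(k+d)/k}$ with exponent in $(0,1)$, and the comparison lemma form exactly the right chain, and your bookkeeping of the homogeneity degrees of $\nabla V$ and $f$ is correct.

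The necessity direction, however, has a genuine gap. The scaling law $T(\lambda x_0)=\lambda^{-d}T(x_0)$ is right, but you then rule out $d\ge 0$ by invoking continuity of the settling-time function at the origin as a ``standard property of finite-time-stable equilibria.'' It is not one: finite-time stability only guarantees that $T$ is finite-valued, and Bhat and Bernstein themselves construct finite-time-stable systems whose settling-time function is discontinuous; in their framework, continuity of $T$ at the origin is \emph{equivalent} to continuity of $T$ on its whole domain and has to be imposed or proved separately, not assumed. Concretely, for $d=0$ your scaling shows $T$ is constant along each ray and for $d>0$ it blows up as $\lambda\to 0^{+}$, but neither fact contradicts the definition of finite-time stability, which tolerates arbitrarily large (finite) settling times near the origin. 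The standard and elementary fix: if $d\ge 0$, homogeneity gives $\|f(x)\|=\|x\|^{1+d}\,\|f(x/\|x\|)\|\le L\|x\|$ for $\|x\|\le 1$, where $L=\sup_{\|y\|=1}\|f(y)\|$, and Gronwall's inequality applied to $\|\psi(t,x_0)\|^{2}$ yields $\|\psi(t,x_0)\|\ge\|x_0\|e^{-Lt}>0$ for all finite $t$, so no nonzero trajectory can reach the origin in finite time. Replacing your continuity appeal with this estimate closes the argument; everything else in your proposal stands.
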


\begin{theorem}
\label{Th.ComHom}
\citep[Theorem~7.4]{Bhat2005}  Suppose $f(x) = f_1(x) + \ldots + f_k(x)$, where $f(0)=0$ and for each $i =
1, \ldots, k$, the vector field $f_i(x)$ is continuous, homogeneous of degree $d_i$
with respect to the standard dilation and $d_1 < \cdots < d_k$. If the origin is a finite-time-stable equilibrium
under $f_1(x)$, then the origin is a finite-time-stable equilibrium under $f(x)$.
\end{theorem}

The results in~\cite{Bhat2005} required continuous vector fields. This restriction was eliminated in~\cite{Orlov2004,Levant2005} and extended for the finite-time stability analysis of switched systems and differential inclusions. Thus, Theorem~\ref{Th.FTStability} and Theorem~\ref{Th.ComHom} hold even if $f(x)$ is not continuous.

\begin{theorem}
\label{Th:Fixed}
\citep{Andrieu2008,Polyakov2016}
Let the vector field $f:\mathbb{R}^n\rightarrow\mathbb{R}^n$ be homogeneous in the $0-$limit with degree $d_0<0$ and homogeneous in the $+\infty$-limit with degree $d_\infty>0$. If for the dynamic systems $\dot{x}=-f(x)$, $\dot{x}=-f_0(x)$ and $\dot{x}=-f_\infty(x)$ the origin is globally asymptotically stable (where $f_0$ and $f_{\infty}$ are obtained from~\eqref{Eq:Hom} with $\lambda_0=0$ and $\lambda_0=+\infty$, respectively), then the origin of $\dot{x}=-f(x)$ is a globally fixed-time stable equilibrium.
\end{theorem}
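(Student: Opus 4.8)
The plan is to prove fixed-time stability by controlling the settling time in two separate regimes and then adding the two bounds. Concretely, I would fix a ball $\bar{B}_r=\{x:\norm{x}\le r\}$ and decompose the settling time of any trajectory as the time needed to \emph{enter} $\bar{B}_r$ plus the time needed to reach the origin once inside $\bar{B}_r$. The negative-degree $0$-limit will be used to bound the second (local) contribution uniformly, while the positive-degree $\infty$-limit will be used to bound the first (global) contribution uniformly in $x_0$; since both bounds are independent of the initial condition, their sum furnishes the required $T_{max}$.

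For the local regime, I would start from the hypothesis that $f_0$ is homogeneous of degree $d_0<0$ and that $\dot{x}=-f_0(x)$ is globally asymptotically stable. By Theorem~\ref{Th.FTStability} this system is then globally finite-time stable. The next step is to transfer this to $\dot{x}=-f(x)$ itself: since $f$ is homogeneous in the $0$-limit with approximation $f_0$ (in the sense of~\eqref{Eq:Hom} with $\lambda_0=0$), a homogeneity-robustness argument in the spirit of \cite{Andrieu2008} shows that local finite-time stability is inherited near the origin. I would then invoke continuity of the settling-time function on the domain of finite-time stability \cite{Bhat2005} to conclude that there exists $r>0$ with $T_{\mathrm{loc}}:=\sup_{\norm{x_0}\le r}T(x_0)<\infty$.

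For the global regime, I would exploit that $f_\infty$ is homogeneous of degree $d_\infty>0$ with $\dot{x}=-f_\infty(x)$ globally asymptotically stable. Rosier's converse theorem \cite{Rosier1992} then provides a positive-definite Lyapunov function $V$ homogeneous of some degree $\rho>0$ whose derivative along $f_\infty$ satisfies $\dot{V}\le-c\,V^{\beta}$ with $\beta=(\rho+d_\infty)/\rho>1$. Because $f$ agrees with $f_\infty$ in the $\infty$-limit, this differential inequality persists, with an adjusted constant, for $\dot{x}=-f(x)$ outside a sufficiently large ball. Integrating $\dot{V}\le-c'V^{\beta}$ from an arbitrary initial level down to a fixed level $v^\ast$ chosen so that $\{V\le v^\ast\}\subseteq \bar{B}_r$ yields a bound of the form $\tfrac{(v^\ast)^{1-\beta}}{c'(\beta-1)}$, which is finite and, crucially, independent of $x_0$ precisely because $\beta>1$. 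The bounded annular region that remains is crossed in uniformly bounded time by global asymptotic stability, giving a total $T_{\infty}<\infty$ that does not depend on $x_0$.

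Combining the two estimates, every trajectory enters $\bar{B}_r$ within time $T_\infty$ and reaches the origin within a further time $T_{\mathrm{loc}}$, so $T(x_0)\le T_\infty+T_{\mathrm{loc}}=:T_{max}$ for all $x_0\in\mathbb{R}^n$; together with the stability of the origin (inherited from global asymptotic stability of $\dot{x}=-f(x)$), this is exactly global fixed-time stability. I expect the main obstacle to be rigorously justifying the two perturbation transfers: that $f$, which is only homogeneous in the limits and not globally homogeneous, genuinely inherits finite-time convergence near the origin from $f_0$ and the $V^{\beta}$ decay rate far away from $f_\infty$. Controlling the error between $f$ and each homogeneous approximation uniformly on the relevant dilated spheres---essentially the content of the homogeneity-in-the-bi-limit machinery of \cite{Andrieu2008} and the existence of Lyapunov functions homogeneous in the bi-limit---is where the real work lies.
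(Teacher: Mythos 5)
The paper does not prove this theorem; it is quoted verbatim from \cite{Andrieu2008,Polyakov2016}, so there is no ``paper's proof'' to compare against. Your proposal correctly reconstructs the standard argument from those references: the positive-degree $\infty$-limit yields, via a Rosier-type homogeneous Lyapunov function, a decay $\dot{V}\le -c'V^{\beta}$ with $\beta>1$ far from the origin, whose integral from $V=\infty$ down to a fixed level is finite and independent of $x_0$; the negative-degree $0$-limit yields finite-time convergence from the resulting compact set; and the two uniform bounds sum to $T_{max}$. The one place I would tighten your argument is the local step: rather than appealing to continuity of the settling-time function (which, per \cite{Bhat2005}, holds only when it is continuous at the origin and is not automatic for the non-homogeneous $f$), it is cleaner to extract the bound $T_{\mathrm{loc}}\le V(x_0)^{1-\alpha}/(c(1-\alpha))$ with $\alpha=(\rho+d_0)/\rho<1$ directly from the Lyapunov inequality that the bi-limit machinery provides near the origin---this is exactly how \cite{Andrieu2008} closes the argument, by constructing a single Lyapunov function homogeneous in the bi-limit satisfying $\dot{V}\le -c\left(V^{\alpha}+V^{\beta}\right)$. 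With that adjustment your outline is a complete and correct proof.
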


The following lemma introduces vector fields that guarantee finite-time and fixed-time stability.

\begin{lemma}\label{lem:control} The origin of the nonlinear system~\eqref{Eq.NonLinear} is globally
\begin{itemize}
    \item finite-time stable if
    \begin{equation}
\label{Eq.SignFunction}
f(x)=k\sign(x), \ \ \ \text{with } k>0,
\end{equation}
\item finite-time stable if
\begin{equation}
\label{Eq.FiniteTimeFunction}
f(x)=k\lfloor x \rceil^\alpha, \ \ \ \text{with } k>0 \text{ and } \alpha\in(0,1),
\end{equation}
\item fixed-time stable if
\end{itemize}
\begin{equation}
\label{Eq.FixedTimeFunction}
f(x)=k_1\lfloor x \rceil^p+k_2\lfloor x \rceil^{q}, \ \ \ \text{with } q>1>p\geq0, \ \ \ k_1,k_2>0,
\end{equation}
where $\lfloor x \rceil^\alpha=|x|^\alpha \sign(x)$ and
$$
\sign(x)=
\left\lbrace
\begin{array}{rcc}
1 & \text{if} & x>0 \\
0 & \text{if} & x=0 \\
-1 & \text{if} & x<0.
\end{array}
\right.
$$
\end{lemma}

\subsection{On finite-time and fixed-time consensus for first-order agents}\label{subsection-comparison}

\begin{definition}
A switched dynamic network (or simply dynamic network) $\mathcal{X}_{\sigma(t)}$ is described by the tuple $\mathcal{X}_{\sigma(t)}=\langle\mathcal{F},\sigma\rangle$ where $\mathcal{F}=\{\mathcal{X}_1,\ldots,\mathcal{X}_m\}$ is a collection of undirected graphs having the same vertex set $\mathcal{V}(\mathcal{X}_{\sigma(t)})$ and $\sigma:[t_0,\infty)\rightarrow \{1,\ldots m\}$ is a switching signal that determines the topology of the dynamic network at each instant of time, i.e. $\mathcal{X}_{\sigma(t)}=\mathcal{X}_i$ when $\sigma(t)=i$.\\
Furthermore, each vertex $i\in \mathcal{V}(\mathcal{X}_{\sigma(t)})$ is associated to an agent $i$, with a first-order integrator dynamics:
\begin{equation}
\label{Consensus}
\dot{x}_i(t)=u_i(t)
\end{equation}
where $x_i(t)\in \mathbb{R}$ is the state of the $i-th$ agent and $u_i(t)$ is its consensus protocol. The network is said to achieve consensus if the evolution of the network converges to $x_1=\cdots=x_n$.

At a given time $t\geq0$, an agent $i$ has access to the state of its in-neighbors agents $\mathcal{N}^{-}_i(\mathcal{X}_{\sigma(t)})$.
\end{definition}

In this paper, we assume that $\sigma(t)$ is exogenously generated and that there is a minimum dwell time $\tau_{min}$ between consecutive switchings in such a way that Zeno behavior in the network's dynamic is excluded, i.e., there is a finite number of switchings in any finite time interval.

Consider a continuous nonlinear function $f:\mathbb{R}\rightarrow\mathbb{R}$, with $f(0)=0$, such that the origin of $\dot{x}=-f(x(t))$ is globally asymptotically stable. Then, two directions (approaches) can be taken to derive nonlinear consensus algorithms, where the protocol for each node $i$ is based on the states of the nodes in its in-neighbor set $\mathcal{N}^{-}_{i}(\mathcal{X}_{\sigma(t)})$. On the one hand, an approach is defined by applying the function $f(x)$ on the error described by each neighboring node, i.e.,
\begin{equation}
\label{Eq.NonlinearConsensusA}
u_i(t)=\sum_{j\in\mathcal{N}^{-}_i(\mathcal{X}_{\sigma(t)})}a_{ij}f(e_{ij}), \ \ \  e_{ij}=x_j(t)-x_i(t).
\end{equation}

On the other hand, another approach is defined by applying the $f(x)$ on the sum of the errors of all neighboring nodes, i.e.,
\begin{equation}
\label{Eq.NonlinearConsensusB}
u_i(t)=f (e_i), \ \ \ \ e_i=\sum\limits_{j\in\mathcal{N}^{-}_i(\mathcal{X}_{\sigma(t)})}a_{ij}e_{ij}=\sum\limits_{j\in\mathcal{N}^{-}_i(\mathcal{X}_{\sigma(t)})}a_{ij}(x_j(t)-x_i(t)).
\end{equation}

\begin{definition}\label{def:dir}
Approaches~\eqref{Eq.NonlinearConsensusA} and~\eqref{Eq.NonlinearConsensusB} are said to be direction~\eqref{Eq.NonlinearConsensusA} and direction~\eqref{Eq.NonlinearConsensusB}, respectively.
\end{definition}

Based on standard feedback controllers presented in Lemma~\ref{lem:control} and considering the directions~\eqref{Eq.NonlinearConsensusA} and~\eqref{Eq.NonlinearConsensusB} in the sense of Definition~\ref{def:dir}, Table~\ref{Tab:DiferentConsensus} provides particular consensus algorithms of the form $f(x)=g_r(x)$, $r=1\ldots4$. 

\setcounter{tblEqCounter}{\theequation}
\begin{table}[H]
\scriptsize
    \centering
    \begin{tabular}{|c|l|l|}
    \hline
        $f(x)$ & Direction~\eqref{Eq.NonlinearConsensusA} & Direction~\eqref{Eq.NonlinearConsensusB}\\
        \hline
        $g_1(x)=kx$ & $u_i=k\sum\limits_{j\in\mathcal{N}^{-}_i(\mathcal{X}_{\sigma(t)})} a_{ij}e_{ij}$ \hfill\numberTblEq{Eq:Linear} &  $u_i=k e_i$ \hfill \\
        $g_2(x)=k\sign(x)$, & $u_i=k\sum\limits_{j\in\mathcal{N}^{-}_i(\mathcal{X}_{\sigma(t)})} a_{ij}\sign(e_{ij})$ \hfill\numberTblEq{Eq:ConsSign} & $u_i=k\sign(e_i)$ \hfill \numberTblEq{Eq:ConsSignMine}\\
        $g_3(x,\alpha)=k\lfloor x \rceil^\alpha$ & $u_i=k\sum\limits_{j\in\mathcal{N}^{-}_i(\mathcal{X}_{\sigma(t)})} a_{ij}\lfloor e_{ij} \rceil^\alpha$ \hfill\numberTblEq{Eq.OtherFTConsesusAlg} & $u_i=k\lfloor e_i \rceil^\alpha$\hfill\numberTblEq{Eq:ConsFiniteMine}\\
        $g_4(x,p,q)=k_1\lfloor x \rceil^p+k_2\lfloor x \rceil^q$ & $u_i=\sum\limits_{j\in\mathcal{N}^{-}_i(\mathcal{X}_{\sigma(t)})}a_{ij}(k_1\lfloor e_{ij} \rceil^p+k_2\lfloor e_{ij} \rceil^q)$\hfill\numberTblEq{Eq.FixedConsesusAlg} & $u_i=k_1\lfloor e_i \rceil^p+k_2\lfloor e_i \rceil^q$\hfill\numberTblEq{Eq:ConsFixedMine}\\   
\hline
    \end{tabular}
    \caption{Some examples of consensus algorithms derived following direction~(\ref{Eq.NonlinearConsensusA}) and direction~(\ref{Eq.NonlinearConsensusB}).}
    \label{Tab:DiferentConsensus}
\end{table}
\setcounter{equation}{\thetblEqCounter} 

\begin{remark}\label{rem:comb}
The basic stabilizing functions given in Table~\ref{Tab:DiferentConsensus} can be combined to generate consensus algorithms following either direction~(\ref{Eq.NonlinearConsensusA}) or direction~(\ref{Eq.NonlinearConsensusB}), by taking  $f(x)=l_1(x)g_1(x)+l_2(x)g_2(x)+l_3(x)g_3(x,\alpha)+l_4(x)g_4(x,p,q)$ where $l_r(x)$, $r=1\ldots4$ are nonnegative piecewise constant functions (when $l_r(x)$ is constant we simply write $l_r$) not all zero at the same time. In this paper, our focus is on protocols obtained from $f(x)$ following direction~\eqref{Eq.NonlinearConsensusB}; and we will derive conditions on $f(x)$ such that a finite-time or a fixed-time consensus algorithm for dynamic networks is obtained.
\end{remark}

In the following, some common consensus protocols proposed in the literature will be presented as particular cases of the combination given in Remark~\ref{rem:comb}.

The standard consensus algorithm~\eqref{Eq:Linear} proposed in~\cite{Olfati-Saber2007} is derived from $f(x)=g_1(x)=kx$, since, for this case, $f(\cdot)$ is a linear function then direction~(\ref{Eq.NonlinearConsensusA}) and direction~(\ref{Eq.NonlinearConsensusB}) are equivalent and its convergence is asymptotic. 

Regarding finite consensus following direction~(\ref{Eq.NonlinearConsensusA}), in~\cite{Hui2010,Sayyaadi2011,Chen2011}
the discontinuous consensus algorithm in~\eqref{Eq:ConsSign} was shown to achieve finite-time convergence, while~\cite{Franceschelli2013} showed that consensus is also achieved in the presence of disturbances and dynamic networks switching among connected topologies. The protocol~\eqref{Eq.OtherFTConsesusAlg} was shown in~\cite{Hui2008,Xiao2009} to be a finite-time consensus algorithm for static networks, while~\cite{Wang2010} showed that it provides finite-time convergence for dynamic networks switching among connected topologies. In~\cite{Liu2016} a protocol switching between~\eqref{Eq:ConsSign} and ~\eqref{Eq.OtherFTConsesusAlg} was proposed, exhibiting finite-time convergence, whereas in~\cite{Cao2014} a finite-time consensus algorithm was obtained by switching between~\eqref{Eq:Linear} and~\eqref{Eq.OtherFTConsesusAlg}. In \cite{Wang2018} finite-time consensus for switching dynamic networks was obtained from $f(x)=l_1g_3(e_{ij},\alpha)+l_2g_1(x)$.

Regarding fixed-time consensus following direction~(\ref{Eq.NonlinearConsensusA}). In~\cite{Parsegov2013,Zuo2014}, the protocol~\eqref{Eq.FixedConsesusAlg} was shown to be a fixed-time consensus algorithm for static networks, later, fixed-time convergence in networks switching among connected topologies was demonstrated in~\cite{Zuo2014a}. In~\cite{Hong2017} different consensus protocols for static networks were proposed derived from $f(x)=l_1g_2(x)+l_2g_3(x,\alpha)+l_3g_4(x,p,q)$ with $l_1,l_2,l_3\geq0$ and $l_2,l_3$ not both zero. In~\cite{Sharghi2016} a fixed-time consensus based on $f(x)=l_1g_1(x)+l_2g_4(x,p,q)$ with $l_1,l_2>0$, was proposed for the leader-follower consensus problem in static networks.

Deriving finite and fixed-time consensus following direction~(\ref{Eq.NonlinearConsensusB}) has been less explored and, as shown below, its analysis has been mainly focused on static networks. In~\cite{Cortes2006} it was shown that~\eqref{Eq:ConsSignMine} is a finite-time consensus for static networks, while~\cite{Franceschelli2015} showed that finite-convergence is maintained in the presence of disturbances and under dynamic networks switching among connected topologies. Furthermore, in~\cite{Li2014,Liu2015} it was shown that~\eqref{Eq:ConsSignMine} is still a consensus algorithm even when switching among disconnected topologies. Although~\eqref{Eq:ConsFiniteMine} and~\eqref{Eq:ConsFixedMine} have been shown to achieve finite-time and fixed-time convergence in~\cite{Xiao2009,Wang2010,Shang2012,Gomez2018} and~\cite{Zuo2014a}, respectively, the results of these papers are restricted to static connected networks. In~\cite{Defoort2015} a fixed-time consensus for the leader-follower consensus problem was presented for static networks.

Recently, in~\cite{Ning2017b,Ning2018} a discontinuous consensus algorithm for static networks was proposed showing that if the protocol is the sum of the linear protocol in~\cite{Olfati-Saber2007} and \eqref{Eq:ConsSignMine} finite-time consensus is obtained; whereas if the protocol is the sum of~\eqref{Eq:ConsFixedMine} and~\eqref{Eq:ConsSignMine} fixed-time consensus is obtained.

A comparison among the different papers addressing the finite-time and the fixed-time consensus problem following direction ~\eqref{Eq.NonlinearConsensusB} is summarized in Table~\ref{Tab:Comparison}. As it can be noted, for papers based on $g_3(x,\alpha)$ and $g_4(x,p,q)$ no formal proofs have been presented in the literature to show that these methods can be applied for networks switching among connected graphs nor for networks forming a jointly connected graph, the main reason is that their analysis is based on Lyapunov functions candidates that are graph dependent. Thus, the argument of a common Lyapunov function cannot be made in their case to show convergence in switched dynamic networks.

\begin{table}[H]
    \small
    \centering
    \begin{tabular}{|c|c|c|c|c|c|c|c|}
    \hline
        Reference & $f(x)$, $l_1,l_2,l_3>0$ & Network Type & Convergence\\
        \hline
        \cite{Cortes2006} & $g_2(x)$ & Static & finite-time\\
        \cite{Xiao2009} & $g_3(x,\alpha)$ & Static & finite-time\\
        \cite{Wang2010} & $g_3(x,\alpha)$ & Static & finite-time\\
        \cite{Shang2012} & $g_3(x,\alpha)$ & Static & finite-time\\
        \cite{Li2014} & $g_2(x)$ & \textit{JC} & finite-time\\
        \cite{Zuo2014a} & $g_4(x,p,q)$ & Static & fixed-time\\
        \cite{Franceschelli2015} & $g_2(x)$ & \textit{SC} & finite-time\\
        \cite{Liu2015} & $g_2(x)$ & \textit{JC} & finite-time\\
        \cite{Defoort2015} & $l_1g_1(x)+l_2g_4(x,2,0)$ & Static & fixed-time\\
        \cite{Tu2017}  & $g_3(x,\alpha)$  &Static & finite-time\\
        \cite{Shang2017} & $l_1g_2(x)+l_2g_4(x,p,q)$  & Static & fixed-time \\
        \cite{Ning2017b} & $l_2g_2(x)+l_4g_4(x)$ & Static & finite-time\\
        \cite{Ning2018} & $l_1g_1(x)+l_2g_2(x)$ & Static & finite-time\\
        \cite{Ning2018} & $l_1g_1(x)+l_2g_2(x)+l_3g_4(x,p,q)$  & Static & fixed-time \\
        \hline
        
    \end{tabular}
    \caption{Comparison of papers presenting finite-time and fixed-time consensus algorithms following direction~\eqref{Eq.NonlinearConsensusB}. Here \textit{SC} stands for networks switching among connected topologies and \textit{JC} stands for networks forming a jointly connected graph and the functions $g_i(\bullet)$, $i=1,\ldots,4$ are defined in Table~\ref{Tab:DiferentConsensus}.}
    \label{Tab:Comparison}
\end{table}

\begin{remark}
The aim of this paper is to analyze finite-time and fixed-time consensus algorithms derived following direction~(\ref{Eq.NonlinearConsensusB}) to show, by using nonsmooth stability analysis, that finite-time and fixed-time consensus is achieved also in dynamic networks. We analyze dynamic networks switching among connected topologies as well as dynamic networks composed of disconnected topologies but forming a connected graph in a ``joint sense''.
\end{remark}

The considered class includes asymptotic, finite-time and fixed-time convergent protocols. Notice that, if the initial conditions are known to belong to a bounded set, fixed-time algorithms may not represent an advantage over finite-time or asymptotic algorithms, because the gain of the latter ones may be selected to provide a desired settling time for any initial condition in the set. On the other hand, under the same topology and initial conditions, fixed-time protocols may require more energy than finite-time protocols to achieve consensus at the same convergence time (which can be seen in the benchmark herein presented), similarly, finite-time protocols may require more energy than asymptotic protocols. Of course, fixed-time consensus algorithms have a great advantage when the initial conditions are unknown and unbounded, because they guarantee the existence of a bound for the convergence time.

\section{Main Result}
\label{Sec.MainResult}

If a consensus algorithm based on direction \eqref{Eq.NonlinearConsensusB} is applied to a dynamic network, then the closed-loop behavior can be compactly represented using a vectorial notation. For this, let $x=[x_1,\ldots,x_n]^T$ be the state vector of the agents of the dynamic network. Let $e_i=\sum_{j\in\mathcal{N}^{-}_j(\mathcal{X}_{\sigma(t)})}a_{ij}(x_j-x_i)$ be the consensus error at node $i$ and let $e=[e_1,\ldots,e_n]^T$ be the consensus error vector. Then, it can be shown that $e=-\mathcal{Q}(\mathcal{X}{_{\sigma{(t)}}}) x$, thus the network's behavior is
\begin{equation}
\label{ConsDynamic}
\dot{x}=-F(\mathcal{Q}(\mathcal{X}{_{\sigma{(t)}}}) x)=F(e)\text{ where }
F(e)=\left[\begin{array}{c}
f(e_1)\\
\vdots \\
f(e_n)
\end{array}\right]. %\text{ with } f(e_i):=k \lfloor e_i \rceil^\alpha.
\end{equation}

Note that $\mathcal{X}_{\sigma(t)}$ is a switched dynamic network and~\eqref{ConsDynamic} is a switched nonlinear system. 

Given a dynamic network, in this section it is proved that consensus is reached by using the direction~(\ref{Eq.NonlinearConsensusB}). Namely, taking
\begin{equation}
\label{Eq:ConsensusProtocol}
\dot{x}_i=u_i\ \ \ \ \ u_i=f(e_i), \ \ \ \ \ e_i=\sum_{j\in\mathcal{N}^{-}_i(\mathcal{X}_{\sigma(t)})}a_{ij}(x_j(t)-x_i(t))
\end{equation}
with $f:\mathbb{R}\rightarrow\mathbb{R}$ such that $f(0)=0$ and the origin is a globally asymptotically stable equilibrium of the system \eqref{Eq.NonLinear}. Moreover, convergence is guaranteed not only for static or dynamic networks switching among connected graphs, but also for dynamic networks switching among disconnected graphs, provided that $\exists\tau<\infty$ such that for any time interval $[\bar{t},\bar{t}+\tau]$ the graph $\bar{\mathcal{X}}$ with vertex set
$\mathcal{V}(\bar{\mathcal{X}})=\mathcal{V}(\mathcal{X}_{\sigma(t)})$ and edge set
$\mathcal{E}(\bar{\mathcal{X}})=\mathcal{E}(\mathcal{X}_{\sigma(t_i)})\cup \cdots \cup \mathcal{E}(\mathcal{X}_{\sigma(t_k)})$ is connected, where $t_i,\ldots,t_k$ are the successive switching times in the time interval $[\bar{t},\bar{t}+\tau]$.
Furthermore, it is shown that if the function $f(\bullet)$ is such that the origin is a globally finite-time (resp. fixed-time) stable equilibrium of the system \eqref{Eq.NonLinear}, and $f(\bullet)$ satisfies the conditions of Theorem~\ref{Th.FTStability} (resp. Theorem~\ref{Th:Fixed}), then the network's closed-loop system reaches consensus in finite-time (resp. fixed-time).

\begin{remark}
Consensus algorithms obtained by using direction~(\ref{Eq.NonlinearConsensusB}) are computationally less expensive than previously proposed finite-time consensus algorithms for dynamic networks that use direction~(\ref{Eq.NonlinearConsensusA}), particularly~\cite{Wang2010} and~\cite{Zuo2014a}. In detail, direction~(\ref{Eq.NonlinearConsensusB}) only requires a single evaluation of the nonlinear function $f(\bullet)$ for each node, whereas direction~(\ref{Eq.NonlinearConsensusA}) requires a number of evaluations of $f(\bullet)$ for each node equal to the number of its in-neighbors, a number that grows in highly connected topologies.
\end{remark}

\subsection{Consensus over static networks}

Assuming that the communication topology is static and connected, the asymptotic convergence to the consensus state of the standard consensus algorithm is shown in this subsection by using the Lyapunov theory. Afterwards, it is shown, by using homogeneity results~\citep{Hermes1991, Rosier1992, Kawski1995,Andrieu2008,Polyakov2016}), that if $f(\bullet)$ satisfies the conditions of Theorem~\ref{Th.FTStability} (resp. Theorem~\ref{Th:Fixed}) then the consensus algorithm is finite-time (resp. fixed-time) convergent.

The convergence to the consensus state will be demonstrated by showing that, the nonsmooth function
\begin{equation}
\label{Eq.LyapunoFunc}
V(x)=\max\{x_1,\ldots,x_n\}-\min\{x_1,\ldots,x_n\},
\end{equation}
also introduced in~\cite{Sayyaadi2011} and \cite{Liu2015} to define the discontinuous consensus protocols~\eqref{Eq:ConsSign} and \eqref{Eq:ConsSignMine}, monotonically decreases along the closed-loop system and it converges to zero. For this, a couple of lemmas will introduce some basic properties of $V(x)$. In detail, the forthcoming Lemma \ref{Lem.MinLips} states that $V(x)$ is absolutely continuous and Lipschitz. Later, Lemma \ref{lem:Sergej} demonstrates that $V(x)$ is differentiable almost everywhere. Based on these properties, Theorem \ref{Prop.AsympStability} provides sufficient conditions for the asymptotic stability of the consensus state in the closed-loop system.

\begin{remark}
When using a candidate Lyapunov function that is not everywhere differentiable, the stability analysis requires the use of tools for nonsmooth analysis, for instance, using Dini derivatives. However, if $V(x)$ is Lipschitz continuous, by \cite[Lemma 6.1]{Bacciotti2006}, $V(x(t))$ is nonincreasing along the evolution of the system if $\dot{V}(x(t))\leq0$ almost everywhere. The subsequent analysis is based on this result.
\end{remark}

\begin{lemma}
\label{Lem.MinLips}
$V(x)$ defined in~\eqref{Eq.LyapunoFunc} is Lipschitz continuous.
\end{lemma}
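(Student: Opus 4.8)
The plan is to exhibit $V$ as the difference of the two functions $M(x)=\max\{x_1,\ldots,x_n\}$ and $m(x)=\min\{x_1,\ldots,x_n\}$ and to show that each of these is globally Lipschitz; since a finite linear combination of Lipschitz functions is again Lipschitz, this immediately settles the claim for $V=M-m$.

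First I would establish the Lipschitz estimate for $M$. Fix $x,y\in\mathbb{R}^n$ and let $k$ be an index attaining $M(x)=x_k$. Then $M(y)\geq y_k=x_k-(x_k-y_k)\geq M(x)-\max_i\abs{x_i-y_i}$, so that $M(x)-M(y)\leq\max_i\abs{x_i-y_i}$. Interchanging the roles of $x$ and $y$ gives the reverse bound, and hence $\abs{M(x)-M(y)}\leq\max_i\abs{x_i-y_i}$; that is, $M$ is Lipschitz with constant $1$ in the $\ell_\infty$-norm. The identical argument, or the elementary identity $m(x)=-M(-x)$, then yields $\abs{m(x)-m(y)}\leq\max_i\abs{x_i-y_i}$.

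Finally, the triangle inequality gives $\abs{V(x)-V(y)}\leq\abs{M(x)-M(y)}+\abs{m(x)-m(y)}\leq 2\max_i\abs{x_i-y_i}$, and combining this with $\max_i\abs{x_i-y_i}\leq\norm{x-y}$ (equivalence of norms on $\mathbb{R}^n$) I would conclude $\abs{V(x)-V(y)}\leq 2\norm{x-y}$, so that $V$ is globally Lipschitz continuous with constant $2$. There is no genuine obstacle here; the only point requiring care is that, for each fixed pair $x,y$, the index attaining the maximum (respectively the minimum) may differ between $x$ and $y$, so one must pick the maximizing index for one point and then symmetrize in order to obtain both directions of the bound.
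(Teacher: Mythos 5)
Your proof is correct, but it takes a genuinely different route from the paper's. The paper argues by induction on the number of coordinates, using the algebraic identity $\min(a,b)=\tfrac{1}{2}(a+b-\abs{a-b})$ to write $\min(x_1,\ldots,x_n)=\tfrac{1}{2}(z_{n-1}+x_n-\abs{z_{n-1}-x_n})$ with $z_{n-1}=\min(x_1,\ldots,x_{n-1})$, and then invokes closure of the Lipschitz class under linear combination and composition; the maximum is handled via $\max(a,b)=-\min(-a,-b)$, exactly as in your reduction $m(x)=-M(-x)$. Your argument instead estimates $M(x)-M(y)$ directly by evaluating $y$ at the index that attains the maximum for $x$ and then symmetrizing, which is shorter, avoids the induction entirely, and has the added benefit of producing an explicit global Lipschitz constant ($1$ for each of $M$ and $m$ in the $\ell_\infty$-norm, hence $2$ for $V$ after the norm comparison $\max_i\abs{x_i-y_i}\leq\norm{x-y}$); the paper's identity-based route buys nothing quantitative but slots the claim into a general ``Lipschitz functions are closed under these operations'' framework. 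Your cautionary remark about the maximizing index possibly differing between $x$ and $y$ is exactly the right point of care, and the symmetrization handles it. Note that the paper's proof ends with an additional observation (absolute continuity of $t\mapsto V(x(t))$ along solutions) that is used later but is not part of the lemma as stated, so its absence from your argument is not a gap.
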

\begin{proof}
We show by induction that $\min(x_1,\ldots,x_n)$ is Lipschitz. Notice that $\min(x_1,x_2)=\frac{1}{2}(x_1+x_2-|x_1-x_2|)$ is Lipschitz as the linear combination and the composition of Lipschitz functions are Lipschitz~\citep[ch.~12]{Eriksson2013}. As an induction step assume that $z_{n-1}=\min(x_1,\ldots,x_{n-1})$ is Lipschitz, then by using the previous argument it follows that $\min(x_1,\ldots,x_n)=\frac{1}{2}(z_{n-1}+x_n-|z_{n-1}-x_n|)$ is Lipschitz. Since $\max(a,b)=-\min(-a,-b)$ then \eqref{Eq.LyapunoFunc} is Lipschitz.\\
Finally, since $x(t)$ is the solution of a differential equation then it is absolutely continuous, thus $g(t)=V(x(t))$ is absolutely continuous if~\eqref{Eq.LyapunoFunc} is locally Lipschitz~\citep[p.~207]{Bacciotti2006}, which has been demonstrated.
\end{proof}

\begin{lemma}
\label{lem:Sergej}
Let $\mathcal{S}$ be the set of all solutions of \eqref{ConsDynamic} and let $g(t)=V(x(t))$ where $x(t)\in\mathcal{S}$ and $V(x)$ is as in~\eqref{Eq.LyapunoFunc}.
Then, for any $t^*$ such that $g(t)$ is not differentiable at $t^*$ there exists $\varepsilon_{t^*}>0$ such that %${\hat e}(t)$
$g(t)$ is differentiable $\forall t\in [t^{*}-\varepsilon_{t^{*}} , t^{*})  \cup (t^{*}, t^{*}+\varepsilon_{t^{*}} ]$. 
\end{lemma}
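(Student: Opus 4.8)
The plan is to localize the analysis to the intervals on which the switching signal is constant and then to reduce $g$ to a pointwise maximum of finitely many smooth functions. By Lemma~\ref{Lem.MinLips} we already know $g$ is absolutely continuous, hence differentiable almost everywhere; what is needed here is the stronger fact that its non-differentiability points do not accumulate. First I would invoke the no-Zeno assumption with dwell time $\tau_{min}$: the switching times of $\sigma$ are isolated, so (a switching instant being trivially an isolated non-differentiability point) it suffices to prove the claim on the interior of a maximal interval $I$ on which $\sigma\equiv k$ is constant. On such an $I$ the closed loop \eqref{ConsDynamic} is the autonomous system $\dot{x}=-F(\mathcal{Q}(\mathcal{X}_k)x)$, and since $e_i(t)=-(\mathcal{Q}(\mathcal{X}_k)x(t))_i$ is continuous and $f$ is continuous, each component obeys $\dot{x}_i(t)=f(e_i(t))$ with continuous right-hand side; hence every $x_i$ is continuously differentiable on $I$ (for the discontinuous choice $f=k\,\sign$ the solution is taken in the Filippov sense and the same reasoning is applied on the subintervals free of sliding).

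The second step is the key reformulation
\[
V(x(t))=\max_{1\le i\le n}x_i(t)-\min_{1\le j\le n}x_j(t)=\max_{(i,j)}\bigl(x_i(t)-x_j(t)\bigr),
\]
so that, reindexing the finite family of differences as $\psi_\nu(t)=x_i(t)-x_j(t)$, $\nu=1,\dots,N$, we have $g=\max_\nu\psi_\nu$, each $\psi_\nu\in C^1(I)$. Writing $A(t)=\{\nu:\psi_\nu(t)=g(t)\}$ for the active index set, the one-sided derivatives of a maximum of $C^1$ functions are
\[
g'_+(t)=\max_{\nu\in A(t)}\dot{\psi}_\nu(t),\qquad g'_-(t)=\min_{\nu\in A(t)}\dot{\psi}_\nu(t),
\]
so $g$ is differentiable at $t$ unless two active differences cross there with distinct slopes; in particular a non-differentiability point $t^{*}$ forces $\lvert A(t^{*})\rvert\ge 2$ together with $g'_+(t^{*})>g'_-(t^{*})$.

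Third, I would rule out accumulation of such crossing points. Suppose non-differentiability points $t_k\to t^{*}$; since there are finitely many indices, one pair $\nu,\mu$ lies in $A(t_k)$ with $\dot{\psi}_\nu(t_k)\ne\dot{\psi}_\mu(t_k)$ for infinitely many $k$, so $h:=\psi_\nu-\psi_\mu$ vanishes along a sequence converging to $t^{*}$ while $\dot h$ does not vanish there. The obstacle is that mere $C^1$ regularity does \emph{not} preclude this (a maximum of two $C^1$ functions can fail to be differentiable on a set accumulating at a point), so I must use the structure of the vector field and not only the smoothness of $g$. The route I would take is to show that the trajectory is piecewise real-analytic on $I$: off the nowhere-dense union of hyperplanes $\{x:\ e_i=0\text{ for some }i\}$ the field $-F(\mathcal{Q}(\mathcal{X}_k)\,\cdot\,)$ is real-analytic (for each admissible $f$, since $\lfloor y\rceil^{\alpha}$ and $k_1\lfloor y\rceil^{p}+k_2\lfloor y\rceil^{q}$ are analytic in $y\ne0$), so on every maximal subinterval where no $e_i$ vanishes the solution, and hence each $\psi_\nu$, is real-analytic; there $h$ either vanishes identically (forcing $\psi_\nu\equiv\psi_\mu$, which contradicts $\dot h(t_k)\ne0$) or has only isolated zeros (contradicting $t_k\to t^{*}$). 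I would then close the argument by checking that the analyticity subintervals themselves accumulate only at isolated times, since on each of them the analytic functions $e_i$ have isolated zeros. I expect this final accumulation bookkeeping, rather than the local characterization of non-differentiability, to be the main difficulty, and it is precisely where the specific (continuous, piecewise-analytic) form of $f$ must be used; the discontinuous case $f=k\,\sign$, with its possible sliding on $\{e_i=0\}$, is the sharpest instance of this obstacle.
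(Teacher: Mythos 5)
Your reduction of $V$ to a finite pointwise maximum of $C^1$ functions and the one-sided-derivative characterization $g'_+(t)=\max_{\nu\in A(t)}\dot\psi_\nu(t)$, $g'_-(t)=\min_{\nu\in A(t)}\dot\psi_\nu(t)$ are correct, and you have put your finger on a real issue. But your proposed resolution is not a proof. First, the analyticity route is left unfinished at exactly the step you flag: you never establish that the endpoints of the maximal subintervals on which no $e_i$ vanishes fail to accumulate, and the ``analytic functions have isolated zeros'' argument begs the question there, because the solution is only known to be analytic \emph{away} from the zero set of the $e_i$ --- precisely where $f$ (e.g.\ $k\lfloor\cdot\rceil^\alpha$, which is non-Lipschitz at $0$) loses regularity. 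Second, the route imports a hypothesis the lemma does not have: the lemma is invoked in Theorem~\ref{Prop.AsympStability} for any continuous $f$ with globally asymptotically stable origin, not only for $f$ real-analytic off $0$. Third, the $\sign$ case \eqref{Eq.SignFunction} is left open by your own admission. So as written the argument covers neither the stated generality nor all the protocols of Table~\ref{Tab:DiferentConsensus}.

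The paper's proof is purely local and far more elementary; no analyticity and no global bookkeeping are needed, because the lemma only asks for a punctured neighborhood of differentiability around each non-differentiability point. It treats $x_{max}$ and $x_{min}$ separately and observes that $x_{max}$ can fail to be differentiable at $t^*$ only if several indices attain the maximum there with derivatives ordered as $\dot x_{i_1}(t^*)\le\cdots\le\dot x_{i_{k-1}}(t^*)<\dot x_{i_k}(t^*)$. Continuity keeps the inactive indices strictly below the maximum near $t^*$, and a first-order Taylor expansion of the differences $x_{i_k}-x_{i_j}$ (value $0$, strictly positive derivative at $t^*$) shows that $x_{i_k}$ strictly dominates all other active indices on a right punctured neighborhood, so $x_{max}=x_{i_k}$ there and is differentiable; the symmetric argument with the smallest-derivative active index handles the left side. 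Note that the degenerate configuration you worry about --- two active indices sharing the extremal derivative, which first-order information cannot separate --- is exactly what the strict inequality in the paper's necessary condition rules out; if you want to press on that case you would indeed need additional structure, but your analyticity sketch does not currently supply it.
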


\begin{proof}
Note that $g(t)$ is differentiable if both $x_{max}(t)=\max\{x_1,\ldots,x_n\}$ and  $x_{min}(t)=\min\{x_1,\ldots,x_n\}$ are differentiable. Therefore, to prove the lemma it is sufficient to prove its statement with
$g(t)$ being replaced by $x_{max}(t)$ and then to prove its statement with
$g(t)$ being replaced by $x_{min}(t)$. We will provide such a proof for $x_{max}(t)$ only, since the proof for $x_{min}(t)$ is analogous.

To do so, let us notice that $x_{max}(t)$ is not differentiable at time $t^{*}$ only if $\exists k\ge 2 $ and positive integers $i_{1},\ldots , i_{k}$ such that
$$x_{i_{1}}(t^{*})=\ldots = x_{i_{k}}(t^{*})=x_{max}(t^{*}),~~x_{j}(t^{*})<x_{max}(t^{*})~\forall j\not\in \{i_{1},\ldots , i_{k}\} ,$$
$$
\frac{{\rm d}}{{\rm d}t}x_{i_{1}}(t^{*})\le \ldots \le \frac{{\rm d}}{{\rm d}t}x_{i_{k-1}}(t^{*})
< \frac{{\rm d}}{{\rm d}t}x_{i_{k}}(t^{*}).$$
Then, using a simple continuity argument, there exists $\varepsilon_{t^{*}}^{1}>0 $ such that
$$x_{j}(t)<x_{max}(t)~\forall j\not\in \{i_{1},\ldots , i_{k}\},~\forall t\in [t^{*}-\varepsilon_{t^{*}}^{1} ,  t^{*}+\varepsilon_{t^{*}}^{1}].$$
Moreover, by using a simple Taylor expansion argument it can be proved that there exists $\varepsilon_{t^{*}}^{2} >0 $ such that
$$
x_{i_{k}}(t) > x_{i_{k-1}}(t)\ge \ldots \ge x_{i_{1}}(t),~\forall t\in (t^{*}, t^{*}+\varepsilon_{t^{*}}^{2}].
$$
As a consequence,
$$
x_{max}(t)=\max\{ x_{i_{k}}(t) , x_{i_{k-1}}(t),  \ldots , x_{i_{1}}(t)\}, ~\forall t \in (t^{*}, t^{*}+\varepsilon_{t^{*}} ],~\varepsilon_{t^{*}}:= \min\{\varepsilon_{t^{*}}^{1}, \varepsilon_{t^{*}}^{2}\},
$$
and therefore
$$
x_{max}(t)= x_{i_{k}}(t)> x_{j}(t) ~\forall j\in \{1,2, \ldots , n\}\setminus\{ i_{k}\},~\forall t \in (t^{*}, t^{*}+\varepsilon_{t^{*}} ],~\varepsilon_{t^{*}}:= \min\{\varepsilon_{t^{*}}^{1}, \varepsilon_{t^{*}}^{2}\} ,
$$
which in turn guarantees that $ x_{max}(t) $ is differentiable $\forall t \in  (t^{*}, t^{*}+\varepsilon_{t^{*}} ]$.

On the other hand, by a Taylor expansion argument, there exists $\varepsilon_{t^{*}}^{3} >0 $ such that
$$
x_{i_{1}}(t) \ge x_{i_{2}}(t)\ge \ldots > x_{i_{k}}(t),~\forall t\in [t^{*}-\varepsilon_{t^{*}}^{3}, t^{*}),
$$
and thus
$$
x_{max}(t)= x_{i_{1}}(t)\geq x_{j}(t) ~\forall j\in \{1,2, \ldots , n\}\setminus\{ i_{1}\},~\forall t \in [t^{*}-\varepsilon_{t^{*}}, t^{*}),~\varepsilon_{t^{*}}:= \min\{\varepsilon_{t^{*}}^{1}, \varepsilon_{t^{*}}^{3}\} ,
$$
which guarantees that $ x_{max}(t) $ is differentiable $\forall t \in  [t^{*}-\varepsilon_{t^{*}}, t^{*})$.
Thus, the claim of the lemma is proved.
\end{proof}

\begin{theorem}
\label{Prop.AsympStability}
Consider a dynamic network $\mathcal{X}_{\sigma(t)}=\langle\mathcal{F},\sigma\rangle$ such that $\sigma(t)=r$, $\forall t\geq t_0$, and $\mathcal{X}_r$ is a connected graph. Consider a consensus algorithm defined by direction \eqref{Eq.NonlinearConsensusB} and a continuous function $f:\mathbb{R}\rightarrow\mathbb{R}$ such that the origin is a globally asymptotically stable equilibrium of the system $\dot{x}=-f(x)$. Then, the equilibrium of the network's closed-loop system is globally asymptotically stable.
\end{theorem}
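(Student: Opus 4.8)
The plan is to use $V(x)$ from \eqref{Eq.LyapunoFunc} as a nonsmooth Lyapunov function for the consensus set $\mathcal{C}=\{x:x_1=\cdots=x_n\}$, which is the equilibrium set of \eqref{ConsDynamic}. First I would extract from the hypothesis the only structural fact about $f$ that is needed: since the scalar system $\dot{x}=-f(x)$ is globally asymptotically stable with $f$ continuous and $f(0)=0$, the map $f$ can have no nonzero root (a root would be a second equilibrium), and by the intermediate value theorem together with the direction of the scalar flow it must satisfy $\sign(f(s))=\sign(s)$ for all $s$; in particular $f(s)=0\iff s=0$. I would also record that each component $x_i(t)$ is in fact $C^1$: the error $e_i=-(\mathcal{Q}(\mathcal{X}_r)x)_i$ is linear, hence continuous, so $\dot{x}_i=f(e_i)$ is a continuous function of $t$ along any absolutely continuous solution.

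Second, I would prove that $x_{max}(t)$ is nonincreasing and $x_{min}(t)$ is nondecreasing, whence $V$ is nonincreasing. For any index $i$ attaining the maximum, $x_j-x_i\le 0$ for every in-neighbor $j$, so $e_i=\sum_{j\in\mathcal{N}^{-}_i(\mathcal{X}_r)}a_{ij}(x_j-x_i)\le 0$ and therefore $\dot{x}_i=f(e_i)\le 0$; symmetrically $\dot{x}_j\ge 0$ at any minimising index. Using Lemma~\ref{lem:Sergej} to reduce to the one-sided derivatives on the intervals where $x_{max}$ and $x_{min}$ are differentiable (equivalently, bounding the upper Dini derivative of $x_{max}$ by the maximum of $\dot{x}_i$ over the maximising indices), this gives $D^{+}x_{max}\le 0$ and $D^{+}(-x_{min})\le 0$, so $V(x(t))$ is nonincreasing by Lemma~\ref{Lem.MinLips}. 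Since $V(x)=2\min_{c}\|x-c\mathbf{1}\|_\infty$ measures the sup-norm distance to $\mathcal{C}$, monotonicity of $V$ immediately yields Lyapunov stability of $\mathcal{C}$ (take $\delta=\varepsilon$), and it also confines the whole trajectory to the box $[\,x_{min}(t_0),x_{max}(t_0)\,]^n$, so every solution is bounded.

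Third, I would establish attractivity by a LaSalle-type invariance argument. Boundedness makes the $\omega$-limit set $\Omega$ nonempty, compact and invariant, and the continuous nonincreasing $V$ is constant, equal to some $V^{\ast}\ge 0$, on $\Omega$. Along any solution $y(t)$ lying in $\Omega$ we then have $V(y(t))\equiv V^{\ast}$; because $x_{max}$ is nonincreasing and $x_{min}$ nondecreasing while their difference is constant, both must in fact be constant along $y$. Constancy of $x_{max}$ forces $\dot{y}_i(t)=0$, hence $f(e_i(t))=0$, hence $e_i(t)=0$, at every maximising index $i$; but $e_i=0$ together with $x_j\le x_i$ forces $x_j=x_i$ for all neighbors $j$, i.e. every neighbor of a maximiser is again a maximiser. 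By connectedness of $\mathcal{X}_r$ this propagates to the whole vertex set, so $x_{min}=x_{max}$ and $V^{\ast}=0$. Thus $V(x(t))\to 0$, proving global attractivity of $\mathcal{C}$, and together with the stability established above, global asymptotic stability.

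The step I expect to be the main obstacle is the rigorous treatment of the nondifferentiability of $V$ in the monotonicity argument and the accompanying invariance step: one cannot hope for a strict instantaneous decrease $\dot{V}<0$ whenever $V>0$ (a maximiser may momentarily have all its neighbors at the maximum, giving $e_i=0$ and $\dot{V}=0$ even though $V>0$), so the conclusion $V^{\ast}=0$ genuinely relies on the invariance principle rather than on a pointwise Lyapunov inequality, and the reduction to one-sided derivatives furnished by Lemma~\ref{lem:Sergej} is what makes the monotonicity claim precise.
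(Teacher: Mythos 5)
Your proof is correct and follows essentially the same route as the paper: the max-minus-min Lyapunov function \eqref{Eq.LyapunoFunc}, monotonicity obtained from the sign of the consensus error at the extremal agents (via Lemmas \ref{Lem.MinLips} and \ref{lem:Sergej}), and a LaSalle-type invariance step that forces $e_i=0$ at every maximiser and propagates the maximiser property through the connected graph. The only differences are cosmetic---you phrase the invariance step on the $\omega$-limit set rather than on the set where $\dot V=0$ over a subinterval, and you handle the degenerate case $e_i=0$ at a maximiser slightly more carefully than the paper's claim that $\sign(e_j)=-1$---but the substance is identical.
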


\begin{proof}
Notice that ~\eqref{Eq.LyapunoFunc} is radially unbounded and $V(x)> 0$ if $x\notin \ker \mathcal{Q}(\mathcal{X}_r)$, where $\ker \mathcal{Q}(\mathcal{X}_r)=\{x:x_1=\cdots=x_n\}$ is the set of equilibrium points of the network's closed-loop system \eqref{ConsDynamic}, i.e. the consensus states. Notice that $V(x)=0$ implies $x\in \ker \mathcal{Q}(\mathcal{X}_r)$.

Moreover,~\eqref{Eq.LyapunoFunc} is Lipschitz continuous by Lemma \ref{Lem.MinLips}. Thus, according to \cite[Lemma 6.1]{Bacciotti2006}, $V(x)$ is nonincreasing along the network's closed-loop behavior~\eqref{ConsDynamic} if $\dot{V}(x)\leq0$ for almost every $x\notin\ker \mathcal{Q}(\mathcal{X}_k)$, which will be demonstrated in the sequel.

Now, according to Lemma~\ref{lem:Sergej}, $V(x(t))$ is continuously differentiable except on a set of isolated points $\{t_{1}^*,t_{2}^*,...\}$.
Let $\{ t_{1}^*,t_{2}^*, \ldots \}$ be the set of points where $V(t)$ is not differentiable, then $\forall t\in (t_i^*,t_{i+1}^*)$ , $V(x)$ is differentiable with time derivative, $\dot{V}(x)=(f(e_j)-f(e_k))$, where $x_j=x_{max}$ and $x_k=x_{min}$. Since $x_j=x_{max}$ and $e_j=\sum_{i\in\mathcal{N}^{-}_j(\mathcal{X}_r)}a_{ji}(x_i-x_j)$, then
$\sign(e_j)=-1$ and thus $f(e_j)=-|f(e_j)|$. By using a similar argument, it can be shown that $\sign(e_k)=1$ with $f(e_k)=|f(e_k)|$ and therefore
\begin{equation}
\label{Eq.ProfT11}
\dot{V}=-(|f(e_j)|+|f(e_k)|)\leq0,\ \forall t\in (t_i^*,t_{i+1}^*).
\end{equation}

Next, asymptotic convergence can be proved by using LaSalle's invariance principle~\citep{Khalil2002}. To this aim, let $E=\{x\in \mathbb{R}^n\setminus \ker \mathcal{Q}(\mathcal{X}_r)\,|\,\dot{V}(x)=0 \}$. Let $x(t)\in E$ for a nonzero subinterval $(\hat{t}_i^*,\hat{t}_{i+1}^*)\subseteq(t_i^*,t_{i+1}^*)$. Thus, according to \eqref{Eq.ProfT11} $\dot{V}(x)=0$ implies $f(e_j)=f(e_k)=0$, which implies $e_j=e_k=0$ because of the theorem's conditions on $f(\bullet)$.
Now, since $x_j$ is the maximum, $e_j=\sum_{i\in\mathcal{N}^{-}_j(\mathcal{X}_k)}a_{ji}(x_i-x_j)=0$ implies $x_j=x_i$
$\forall i\in\mathcal{N}^{-}_j(\mathcal{X}_k)$ and $\forall t\in(\hat{t}_i^*,\hat{t}_{i+1}^*)$.
Furthermore, $e_i=0$ for all $i\in\mathcal{N}^{-}_j(\mathcal{X}_r)$ (otherwise, if $e_i>0$ then $f(e_i)>0$ and thus $\dot{x}_i(t)<0$, which implies $x_i(t-\delta)>x_j(t-\delta)$ for a small enough $\delta>0$ with $t-\delta\in (\hat{t}_i^*,\hat{t}_{i+1}^*)$, i.e. a contradiction; by an analogous reason $e_i<0$ cannot occur), which implies $x_j=x_i=x_l$, $\forall i\in\mathcal{N}^{-}_j(\mathcal{X}_k),\, \forall l\in\mathcal{N}^{-}_i(\mathcal{X}_k)$. By iterating this reasoning it can be concluded that $x_j=x_p$ for any node $p$ such that there exists a path from $p$ to $j$. In particular, since $\mathcal{X}_r$ is connected, there exists a path from $k$ to $j$, hence $\max(x_1,\ldots,x_n)=x_j=x_k=\min(x_1,\ldots,x_n)$, which clearly implies that $x_1=\ldots=x_n$. Thus, the equality holding in~\eqref{Eq.ProfT11} for a nonzero interval implies that consensus is achieved.
Since $V(x(t))$ is absolutely continuous along the closed-loop trajectory \eqref{ConsDynamic} and according to Lemma~\ref{lem:Sergej} it is differentiable almost everywhere, then by~\eqref{Eq.ProfT11} and LaSalle's invariance principle~\citep{Khalil2002} $\dot{V}<0$ for almost every $t$ such that $V(t)\neq0$, therefore $V(t)$ is decreasing excepting at the consensus states, i.e. the closed-loop system asymptotically converges to the consensus state.
\end{proof}

In the following theorem, additional conditions are given for finite-time and fixed-time convergence of the consensus protocols.

\begin{theorem}
\label{Th.FixedFinite}
Consider a dynamic network $\mathcal{X}_{\sigma(t)}=\langle\mathcal{F},\sigma\rangle$ such that $\sigma(t)=r$, $\forall t\geq t_0$, and $\mathcal{X}_r$ is a connected graph. Consider a consensus algorithm defined by direction \eqref{Eq.NonlinearConsensusB} and a continuous function $f:\mathbb{R}\rightarrow\mathbb{R}$ such that the origin is a globally asymptotically stable equilibrium of the system $\dot{x}=-f(x)$. Then, the consensus algorithm
\begin{enumerate}
\item is a finite-time consensus algorithm if the vector field $f(x)$ can be written as $f(x)=f_1(x)+\ldots+f_k(x)$ and for each $i=1, \ldots, k$, the vector field $f_i(x)$ is homogeneous of degree $d_i$ with respect to the standard dilation and $d_1 <\cdots < d_k$ with $d_1<0$. \label{Th.Item1}
\item is a finite-time consensus algorithm if $f(x)$ is a piecewise function such that there exist a nonzero constant $b$ where $\forall x\in\{x:|x|<b\}$, $f(x)$ satisfies condition~(\ref{Th.Item1}). \label{Th.Item3} 
\item is a fixed-time consensus algorithm if the vector field $f(x)$ is homogeneous in the $0-$limit with degree $d_0<0$, homogeneous in the $+\infty-$limit with degree $d_\infty>0$ and the origin is a globally asymptotically stable equilibrium of the dynamic systems $\dot{x}=-f_0(x)$ and $\dot{x}=-f_\infty(x)$ (where $f_0$ and $f_{\infty}$ are obtained from~\eqref{Eq:Hom} with $\lambda_0=0$ and $\lambda_0=+\infty$, respectively).
\label{Th.Item2}

\end{enumerate}
\end{theorem}
\begin{proof}
Theorem~\ref{Prop.AsympStability} states that the closed-loop behavior \eqref{ConsDynamic} converges asymptotically to its equilibrium. Thus, based on Theorem~\ref{Th.FTStability} and Theorem~\ref{Th.ComHom}, statement~\ref{Th.Item1} holds since the condition in statement~\ref{Th.Item1} implies that the vector field $F(e)$ $e=-\mathcal{Q}(\mathcal{X}_k)x$, can be written as $F(e)=F_1(e)+\cdots+F_k(e)$ such that for each $i=1, \ldots, k$, $F_i(e)$ is homogeneous of degree $d_i$ with respect to the standard dilation, where $d_1<0$ is the smallest degree. This property holds given that if $f_i(x)$ is homogeneous of degree $d_i$ then $F(-\mathcal{Q}(\mathcal{X}_k)\lambda x)=\lambda^{(d_i+1)}F_i(-\mathcal{Q}(\mathcal{X}_k) x)$ for all $\lambda>0$. To show that item (\ref{Th.Item3}) holds, notice that asymptotic convergence to the consensus state implies that after a finite-time the trajectory $x(t)$ will belong to the nonempty set $\{x:\parallel \mathcal{Q}(\mathcal{X}_k)x\parallel_\infty<b \}$ from which the conditions of item (\ref{Th.Item3}) are satisfied, thus achieving finite-time convergence to a consensus state where $\mathcal{Q}(\mathcal{X}_k)x=0$.

Now, let us demonstrate the statement~\ref{Th.Item2}. Consider the closed-loop behavior \eqref{ConsDynamic} and a parameter $d_{\lambda_0}\in\mathbb{R}$. Next, by considering $F_{\lambda_0}(\bullet)$ and $f_{\lambda_0}(\bullet)$ as defined in \eqref{Eq:Hom}, it follows
\begin{equation}\label{eq:aux1}
F_{\lambda_0}(e)=\lim_{\lambda\rightarrow\lambda_0}\lambda^{-(d_{\lambda_0}+1)}F(e)=\left[\begin{array}{c}
-\lim_{\lambda\rightarrow\lambda_0}\lambda^{-(d_{\lambda_0}+1)}f(e_1)\\ \vdots \\ -\lim_{\lambda\rightarrow\lambda_0}\lambda^{-(d_{\lambda_0}+1)}f(e_n)\end{array}\right]=\left[\begin{array}{c}
-f_{\lambda_0}(e_1)\\ \vdots \\ -f_{\lambda_0}(e_n)\end{array}\right].
\end{equation}

On the other hand, by Definition \ref{Def:Bilimit}, the condition of statement~\ref{Th.Item2} implies that $f_{\lambda_0}(e_i)$ is homogeneous with respect to the standard dilation for $\lambda_0=0$ with degree $d_0<0$ and for $\lambda_0=\infty$ with degree $d_\infty>0$. Thus, by \eqref{eq:aux1}, $F_{\lambda_0}(e)$ is homogeneous with respect to the standard dilation for $\lambda_0=0$ with degree $d_0<0$ and for $\lambda_0=\infty$ with degree $d_\infty>0$, which implies that $F(e)$ is homogeneous in the $0-$limit with degree $d_0<0$ and in the $+\infty-$limit with degree $d_\infty>0$, in accordance to Definition \ref{Def:Bilimit}. Moreover, according to~Theorem~\ref{Prop.AsympStability}, if the origin is a globally asymptotically stable equilibrium of $\dot{x}=-f_{\lambda_0}(x)$  then the network's closed-loop system $\dot{x}=-F_{\lambda_0}(\mathcal{Q}(\mathcal{X}_k) x)$ converges to a globally asymptotically stable equilibrium. Then, it follows from Theorem~\ref{Th:Fixed} that the equilibrium of the closed-loop system is globally fixed-time stable.
\end{proof}

The following corollary states, based on Theorem \ref{Th.FixedFinite}, that protocol~\eqref{Eq:ConsFiniteMine} is finite-time convergent and protocol~\eqref{Eq:ConsFixedMine} is fixed-time convergent. These are particular protocols of the analyzed class, but more finite-time and fixed-time protocols can be derived.

\begin{corollary}
\label{Th:FiniteFixed}
Let $\sigma(t)=r$, $\forall t\geq t_0$, and let $\mathcal{X}_r$ be a connected graph and let $g_1(x)=kx$, $g_2(x)=k\sign(x)$, $g_3(x,\alpha)=k\lfloor x\rceil^\alpha$ and $g_4(x,p,q)=k_1\lfloor x\rceil^p+k_2\lfloor x\rceil^q$. 
\begin{enumerate}
\item If a consensus protocol $u_i$ is obtained from $f(x)=l_1g_1(x)+l_2g_3(x,\alpha)$ where $\alpha\in(0,1)$, $l_1\geq0$ and $l_2>0$, following direction~\eqref{Eq.NonlinearConsensusB}, i.e. $u_i=f(e_i)$, then $u_i$ is a continuous consensus algorithm with finite-time convergence. \label{itm:1}
\item If a consensus protocol $u_i$ is obtained from $f(x)=l_1g_1(x)+l_2g_2(x)+l_3g_3(x,\alpha)$ where $\alpha\in(0,1)$, $l_1,l_3\geq0$ and $k_2>0$, following direction~\eqref{Eq.NonlinearConsensusB}, i.e. $u_i=f(e_i)$, then $u_i$ is a discontinuous consensus algorithm with finite-time convergence. \label{itm:1a}
\item If a consensus protocol $u_i$ is obtained from $f(x)=l_1g_1(x)+l_2g_3(x,\alpha)+l_3g_4(x,p,q)$, $l_1,l_2\geq0$, $l_3>0$, $q>\alpha>p$, $q>1>p>0$, following direction~\eqref{Eq.NonlinearConsensusB}, i.e. $u_i=f(e_i)$, then $u_i$ is a continuous consensus algorithm with fixed-time convergence. \label{itm:2}
\item If a consensus protocol $u_i$ is obtained from $f(x)=l_1g_1(x)+l_2g_2(x)+l_3g_3(x,\alpha)+l_4g_4(x,p,q)$ where $l_1,l_3\geq0$, $l_2,l_4>0$, $q>\alpha>p$ and $q>1$, following direction~\eqref{Eq.NonlinearConsensusB}, i.e. $u_i=f(e_i)$, then $u_i$ is a discontinuous consensus algorithm with fixed-time convergence. \label{itm:2a}
\end{enumerate}
\end{corollary}
\begin{proof}

Notice that, with respect to the standard dilation, $g_1(x)$ is homogeneous of degree $d_1=0$, $g_2(x)$ is homogeneous of degree $d_2=-1$, $g_3(x,\alpha)$ is homogeneous of degree $\alpha-1$. Thus, for statement~\ref{itm:1}, $f(x)$  can be written as the sum of two homogeneous functions where the smallest degree is $\alpha-1<0$. Thus, the proof for statement~\ref{itm:1} follows from Theorem~\ref{Th.FixedFinite}. The same argument applies for statement~\ref{itm:1a}, but since $k_2>0$ the smallest degree is $d_2=-1$ from $g_2(x)$. 

To prove the statement \ref{itm:2}, it is easy to verify that $f_0(x)$ defined as
\begin{equation}
\label{Eq:f0Fixed}
f_{0}(x)=\lim_{\lambda\rightarrow0}\lambda^{-(d_{0}+1)}f(\lambda x)=l_4k_1\left\lfloor  e_i \right\rceil^p
\end{equation}
is homogeneous of degree $p-1<0$ with respect to the standard dilation. Thus, $f(\bullet)$ is homogeneous in the $0-$limit with degree $d_0=p-1<0$. In a similar way
\begin{equation}
\label{Eq:finftyFixed}
f_{\infty}(x)=\lim_{\lambda\rightarrow+\infty}\lambda^{-(d_{\infty}+1)}f(\lambda x)=l_4k_2\left\lfloor  e_i \right\rceil^q
\end{equation}
is homogeneous of degree $q-1>1$ with respect to the standard dilation. Thus, the vector field~\eqref{ConsDynamic} is homogeneous in the $+\infty-$limit with degree $d_\infty=q-1$. Furthermore, the origin of $\dot{x}=-l_4k_1\left\lfloor x \right\rceil^p$ and $\dot{x}=-l_4k_2\left\lfloor x \right\rceil^q$ is a globally asymptotically stable equilibrium. 
 
Thus, according to Theorem~\ref{Th.FixedFinite}, statement \ref{itm:2} holds. Statement \ref{itm:2a} follows from a similar argument by noticing that if $k_2>0$ then $f(x)$ is homogeneous in the $+\infty-$limit with degree $d_\infty=q-1>1$ and homogeneous in the $0-$limit with degree $d_\infty=-1$.
\end{proof}

\begin{remark}
The use of homogeneity theory for finite-time and fixed-time convergence analysis does not provide a bound for the convergence-time. However, this approach will allow to demonstrate that protocols derived by following direction~\eqref{Eq.NonlinearConsensusB} (for instance derived from $f(x)$ in Table~\ref{Tab:Comparison}) achieve finite/fixed-time convergence even under dynamic networks.
\end{remark}

\subsection{Consensus over dynamic networks switching among connected topologies}

In the proof of Theorem \ref{Prop.AsympStability} it was shown that the function \eqref{Eq.LyapunoFunc} is a Lyapunov function, valid for any given connected topology. On the other hand, the stability theory for switching systems \citep{Liberzon2003} states that a switching system, composed of a collection of nonlinear systems and an arbitrary switching signal determining the currently evolving nonlinear system, is asymptotically stable if there exists a Lyapunov function valid for all the nonlinear systems in the collection. In this way, \eqref{Eq.LyapunoFunc} is a common Lyapunov function for a dynamic network under arbitrary switching, provided the communication topology is always connected, and thus it can be proved that the consensus state is a globally asymptotic equilibrium of the dynamic network. This is formally stated in the following theorem.

\begin{theorem}
\label{TheoremSwConsensus}
Consider a dynamic network $\mathcal{X}_{\sigma(t)}=\langle\mathcal{F},\sigma\rangle$ such that $\sigma(t)\in\{1,...,m\}$ and $\forall r\in\{1,...,m\}$ the graph $\mathcal{X}_r$ is connected.
Consider a consensus algorithm defined by direction \eqref{Eq.NonlinearConsensusB} and a continuous function $f:\mathbb{R}\rightarrow\mathbb{R}$ such that the origin is a globally asymptotically stable equilibrium of the system $\dot{x}=-f(x)$.
Then, the consensus state is a globally asymptotically stable equilibrium of the network's closed-loop system under an arbitrary switching signal $\sigma(t)$.
Moreover, the consensus algorithm
\begin{enumerate}
\item is a finite-time consensus algorithm if the vector field $f(x)$ can be written as $f(x)=f_1(x)+\ldots+f_k(x)$ and for each $i=1, \ldots, k$, the vector field $f_i(x)$ is homogeneous of degree $d_i$ with respect to the standard dilation and $d_1 <\cdots < d_k$ with $d_1<0$. \label{Th.Item1}
\item is a finite-time consensus algorithm if $f(x)$ is a piecewise function such that there exist a nonzero constant $b$ where $\forall x\in\{x:|x|<b\}$, $f(x)$ satisfies condition~(\ref{Th.Item1}). \label{Th.Item3} 
\item is a fixed-time consensus algorithm if the vector field $f(x)$ is homogeneous in the $0-$limit with degree $d_0<0$, homogeneous in the $+\infty-$limit with degree $d_\infty>0$ and the origin is a globally asymptotically stable equilibrium of the dynamic systems $\dot{x}=-f_0(x)$ and $\dot{x}=-f_\infty(x)$ (where $f_0$ and $f_{\infty}$ are obtained from~\eqref{Eq:Hom} with $\lambda_0=0$ and $\lambda_0=+\infty$, respectively). \label{Th.Item2}
\end{enumerate}
\end{theorem}
\begin{proof}

According to Theorem~\ref{Prop.AsympStability}, the Lyapunov function~\eqref{Eq.LyapunoFunc} asymptotically converges to zero regardless of the current connected topology $\mathcal{X}_r$, i.e. $V(x)$ defined as in~\eqref{Eq.LyapunoFunc} is a common Lyapunov function. Thus, by~\cite[Theorem 2.1]{Liberzon2003}, the equilibrium of the network's closed-loop system is globally asymptotically stable under arbitrary switching of the communication topology. Moreover, since the graph $\mathcal{X}_\sigma$ is connected, $\mathcal{Q}(\mathcal{X}_{\sigma{(t)}})x=0$ implies that $x\in \ker \mathcal{Q}(\mathcal{X}_\sigma{(t)})$, i.e. $x_1=\ldots=x_n$ and consensus is achieved.

The proof for finite/fixed-time stability follows the same argument as in Theorem~\ref{Th.FixedFinite}, i.e., by using homogeneity and Theorem~\ref{Th.FTStability} and Theorem~\ref{Th:Fixed} for finite-time convergence and fixed-time convergence, respectively.
\end{proof}

\begin{remark}
Notice that, for the case of switching among connected graphs, the convergence of the consensus algorithm \eqref{Eq:ConsSignMine}, obtained from \eqref{Eq.SignFunction} following direction~\eqref{Eq.NonlinearConsensusB}, is independent of the network topology, because if $x_j=x_{max}$ and $x_k=x_{min}$ then $\dot{V}=-2k$ with $V(x)$ as in~\eqref{Eq.LyapunoFunc}, regardless of the network topology or the number of nodes. However, this steady convergence rate is not obtained for the consensus algorithm~\eqref{Eq:ConsSign} because a neighbor $x_i$ of $x_j=x_{max}$ (resp. $x_k=x_{min}$) may satisfy $\sign(e_i)=\sign(e_j)$. Thus, $\dot{V}$ will have different values that depend on the topology and the state of the neighbors.
\end{remark}

\begin{example}
\label{Example1}
Consider a network composed of 10 vertices and two different graphs, $\mathcal{X}_0$ and $\mathcal{X}_1$. Let $\mathcal{X}_1$ be such that the $i$-th vertex is adjacent to the \mbox{$j=(i+1)(mod\ 10)$} vertex, where $x(mode\ 10)$ stands for the common residue of $x$ modulo $10$, and  let $\mathcal{X}_0$ be such that the $i$-th vertex is adjacent to the \mbox{$j=(i+3) (mod\ 10)$} vertex. Let $\sigma(t)$ be the switching signal and let the initial condition be $x(t_0)=\left[\begin{array}{cccccccccc}
0 & -5 & 10 & 3 & -8 & -2 & 5 & 3 &-1 & 4
\end{array}
\right]$. Figure \ref{FigEx1} shows the convergence of the finite-time consensus algorithm~\eqref{Eq:ConsFiniteMine} in Table~\ref{Tab:DiferentConsensus}, obtained from~\eqref{Eq.FiniteTimeFunction} following direction~\eqref{Eq.NonlinearConsensusB}, under the graph topology $\mathcal{X}_{\sigma(t)}$ and switching signal $\sigma(t)$.

\begin{figure}
\centering
\includegraphics[width=13cm,trim={0 0 0 3.3cm},clip]{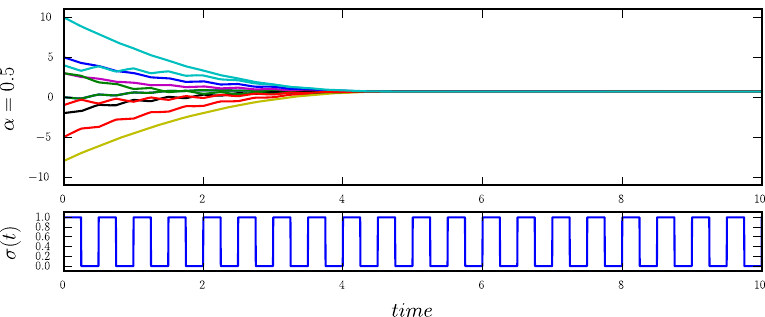}
\caption{Convergence of the consensus algorithm for Example~\ref{Example1} with $k=1$.}
\label{FigEx1}
\end{figure}

\end{example}

\subsection{Consensus over dynamic networks switching among disconnected topologies}

Theorem~\ref{TheoremSwConsensus} guarantees consensus along the network under arbitrary switching. A particular case occurs when $\sigma(t)=i$, $\forall t\in[0,\infty)$, i.e. the system remains in the same topology without switching. Thus, a necessary condition for consensus under arbitrary switching signal is that each possible topology is connected. Otherwise, each connected component could reach a different consensus since there will not be communication among components. This connectivity condition for each network topology can be relaxed by requiring a connected graph in a ``joint sense". This is formalized in the following.

\begin{definition}
Let $\mathcal{X}_{\sigma(t)}=\langle\mathcal{F},\sigma\rangle$ be a dynamic network with $\sigma(t)\in\{1,\ldots,m\}$. The switching signal $\sigma(t)$ is said to generate a $\tau$-jointly connected graph if there exists $\tau<\infty$ such that for all $\bar{t}\geq 0$, the graph $\bar{\mathcal{X}}$ with vertex set
$\mathcal{V}(\bar{\mathcal{X}})=\mathcal{V}(\mathcal{X}_{\sigma(t)})$ and edge set
$\mathcal{E}(\bar{\mathcal{X}})=\mathcal{E}(\mathcal{X}_{\sigma(t_i)})\cup \cdots \cup \mathcal{E}(\mathcal{X}_{\sigma(t_k)})$ is connected, where $t_i,\ldots,t_k$ are the successive switching times in the time interval $[\bar{t},\bar{t}+\tau]$. %i.e.,$\bar{t}<t_i<\ldots<t_k<\bar{t}+\tau$.
\end{definition}

\begin{theorem}
\label{TheoremJoint}
Let $\mathcal{X}_{\sigma(t)}=\langle\mathcal{F},\sigma\rangle$ be a dynamic network such that the switching signal $\sigma(t)$ generates a $\tau$-jointly connected graph. 

Consider a consensus algorithm defined by direction \eqref{Eq.NonlinearConsensusB} and a continuous function $f:\mathbb{R}\rightarrow\mathbb{R}$ such that the origin is a globally asymptotically stable equilibrium of the system $\dot{x}=-f(x)$.
Then, the consensus state is a globally asymptotically stable equilibrium of the consensus evolution~\eqref{ConsDynamic}.

\end{theorem}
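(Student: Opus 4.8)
The plan is to keep $V(x)=\max_i x_i-\min_i x_i$ from \eqref{Eq.LyapunoFunc} as the candidate Lyapunov function, exactly as in Theorem~\ref{Prop.AsympStability} and Theorem~\ref{TheoremSwConsensus}, and to exploit the fact that its monotone decrease does \emph{not} require connectivity of the instantaneous topology. At any time where $g(t)=V(x(t))$ is differentiable (almost every $t$, by Lemma~\ref{lem:Sergej}), writing $x_j=x_{max}$ and $x_k=x_{min}$, one has $e_j=\sum_{i}a_{ji}(x_i-x_j)\le 0$ and $e_k\ge 0$ regardless of whether $\mathcal{X}_{\sigma(t)}$ is connected (an isolated extremal node simply gives $e=0$); since the hypotheses on $f$ force $\sign(f(e))=\sign(e)$, the computation \eqref{Eq.ProfT11} yields $\dot g=f(e_j)-f(e_k)\le 0$. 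As $g$ is absolutely continuous (Lemma~\ref{Lem.MinLips}), it follows that $x_{max}(t)$ is non-increasing and $x_{min}(t)$ non-decreasing along \eqref{ConsDynamic}, every trajectory stays in the box $[x_{min}(t_0),x_{max}(t_0)]^n$ (giving Lyapunov stability of the consensus set), and $g(t)$ decreases monotonically to a limit $V^\ast\ge 0$. It therefore only remains to prove attractivity, i.e. $V^\ast=0$.

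I would prove $V^\ast=0$ by contradiction, assuming $V^\ast>0$ and extracting a limiting ``frozen'' window on which $V$ is constant. Setting $s_\ell=\bar t+\ell\tau$, consider the windows $[s_\ell,s_\ell+\tau]$, over each of which the union graph $\bar{\mathcal X}$ is connected by $\tau$-joint connectivity. The states $x(s_\ell)$ lie in a fixed compact box, there are only finitely many graphs in $\mathcal F$, and the minimum dwell time $\tau_{min}$ bounds the number of switches per window; hence the pairs (initial state, window switching pattern) live in a compact set, so along a subsequence $x(s_\ell)\to\xi$ and the window switching patterns converge. By continuous dependence of the Filippov solutions on the initial condition and on the finitely many switching instants, the corresponding solutions converge on $[0,\tau]$ to a limiting solution $\tilde x(\cdot)$; since there are only finitely many possible unions, along a further subsequence the window union graph is a fixed connected graph $\bar{\mathcal X}^\infty$ and the limiting pattern realizes this same connected union. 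Because $g$ is monotone with $g(s_\ell)\to V^\ast$ and $g(s_\ell+\tau)\to V^\ast$, the limit satisfies $V(\tilde x(t))\equiv V^\ast$ for all $t\in[0,\tau]$.

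The heart of the argument — and the step I expect to be the main obstacle — is to show that $V\equiv V^\ast$ on a jointly connected window forces $V^\ast=0$, contradicting $V^\ast>0$. This is the analogue, now over a time-varying topology, of the LaSalle step in the proof of Theorem~\ref{Prop.AsympStability}: constancy of $V$ gives $\dot g\equiv 0$, so by \eqref{Eq.ProfT11} the instantaneous extremal nodes satisfy $f(e_j)=f(e_k)=0$, i.e. $e_j=e_k=0$ throughout the window. The delicate point is that, unlike the fixed-graph case, the chain of equalities $x_{max}=x_i=x_l=\cdots$ must now be propagated along edges of $\bar{\mathcal X}^\infty$ that are active at \emph{different} instants of the window. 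I would first establish the \emph{invariance of the extremal set}: letting $M^\ast$ be the constant value $\max_i\tilde x_i(t)$, a node at $M^\ast$ possessing a strictly smaller active in-neighbor would have $e<0$ and strictly negative velocity, and propagating this downward would force $x_{max}$ to drop, contradicting $g\equiv V^\ast$; hence every node ever at $M^\ast$ remains at $M^\ast$ on all of $[0,\tau]$, and whenever an edge $(l,i)$ of $\bar{\mathcal X}^\infty$ is active with $i$ at $M^\ast$ one gets $x_l\equiv M^\ast$ as well. Chaining these equalities along a path of $\bar{\mathcal X}^\infty$ joining the minimum to the maximum — which exists because $\bar{\mathcal X}^\infty$ is connected — gives $\max_i\tilde x_i=\min_i\tilde x_i$, i.e. $V^\ast=0$. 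The technical care needed to make the invariance claim rigorous (handling nodes that merely touch $M^\ast$ at isolated instants, and bookkeeping which union edges are active while a given node is extremal) is where the real work lies; once it is in place the contradiction closes, $V(t)\to 0$, and global asymptotic convergence to consensus follows.
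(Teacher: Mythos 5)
Your route is genuinely different from the paper's. The paper's own proof is short and direct: it notes that $\dot V\le 0$ holds for \eqref{Eq.LyapunoFunc} even when the active graph is disconnected, observes that $\dot V=0$ on a nonzero interval only requires agreement inside the components containing the max and the min, and then asserts that within every window of length $\tau$ some graph becomes active under which the maximal node has a strictly smaller neighbour, so that $\dot V<0$ occurs in each window, closing with an appeal to LaSalle. Your compactness/limiting-window argument is structurally more careful: it correctly recognizes that an instantaneous $\dot V<0$ somewhere in each window does not by itself yield convergence, and that what is really needed is a uniform decrement per window, which you extract by passing to a limiting trajectory on which $V\equiv V^\ast$. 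The monotonicity part, the reduction to ``$V\equiv V^\ast$ on a window with connected union graph forces $V^\ast=0$,'' and the use of $\sign(f(e))=\sign(e)$ (which does follow from global asymptotic stability of $\dot x=-f(x)$ for scalar continuous $f$) are all sound. Two technical caveats in the compactness step: continuous dependence relies on the uniqueness-in-forward-time assumption, and you should enlarge each window by $\tau_{min}$ on both sides (using the dwell time) so that graphs truncated at the window boundaries do not degenerate to zero duration in the limit; otherwise the \emph{effective} limiting union graph need not be connected even though every finite-window union is.

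The genuine gap is the step you yourself flag as the heart of the argument: the claimed \emph{invariance of the extremal set} is false, and the justification you sketch for it does not work. A node at $M^\ast$ that acquires a strictly smaller active in-neighbour indeed gets strictly negative velocity and drops below $M^\ast$, but this does \emph{not} force $x_{max}$ to drop so long as some other node remains at $M^\ast$; for instance, with four nodes, $\tilde x_1=\tilde x_2=M^\ast$, $\tilde x_3=\tilde x_4=m^\ast$, and the single active undirected edge $\{2,3\}$, one has $\dot V=f(e_1)-f(e_4)=f(0)-f(0)=0$ while node $2$ leaves the maximal set and node $3$ leaves the minimal set. Conversely, for the non-Lipschitz functions of interest here (e.g.\ $f(x)=k\lfloor x\rceil^{1/2}$), a node strictly below $M^\ast$ whose active neighbours sit at $M^\ast$ can reach $M^\ast$ in \emph{finite} time, so the set of maximizers is not monotone in either direction. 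Consequently ``every node ever at $M^\ast$ remains at $M^\ast$'' fails, and the chaining of equalities along a path of $\bar{\mathcal X}^\infty$ whose edges are active at different instants of the window cannot be carried out by that mechanism. Closing this gap requires a different device --- e.g.\ tracking the set of nodes that remain in a neighbourhood of $M^\ast$ throughout the whole window and showing it must absorb all of its union-graph neighbours, or invoking a Moreau-type set-valued/convexity invariance principle for switched systems. In fairness, the paper's own proof glosses over exactly the same difficulty (its claim that a graph ``will become active'' with the \emph{then-current} maximizer adjacent to a strictly smaller node is asserted, not proved, and LaSalle is invoked for a time-varying system), so your proposal makes the obstruction visible rather than introducing a new one --- but as written it does not surmount it.
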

\begin{proof}
Similarly as in the proof of Theorem~\ref{Prop.AsympStability}, we will show the convergence of $x$ to a consensus state, under the switched dynamic topology $\mathcal{X}_{\sigma(t)}$, by using the candidate Lyapunov function~\eqref{Eq.LyapunoFunc} and showing that $g(t)=V(x(t))$ along the trajectory of the system converges to zero provided that the switching signal generates a $\tau$-jointly connected graph. To this end, notice that if $\mathcal{X}_k$ is the current graph topology, not necessarily connected, then according to Lemma~\ref{lem:Sergej}, $V(x)$ in~\eqref{Eq.LyapunoFunc} is continuously differentiable except on a set of points $\{t_{1}^*, t_2^*,...\}$.

Thus, the time derivative of $V(x)$ along the trajectory of~\eqref{ConsDynamic} in the time interval $(t_i^*,t_{i+1}^*)$ is given by
\begin{equation}
\dot{V}=-(|f(e_j)|+|f(e_k)|)\leq 0 \ \ \ \forall t\in (t_i^*,t_{i+1}^*).
\label{Eq.TheoSw}
\end{equation}

It was shown in the proof of Theorem~\ref{Prop.AsympStability}, that, if the current graph topology is connected, the equality in~\eqref{Eq.TheoSw} holds for a nonzero interval only if consensus is achieved, i.e. $x_1=\ldots=x_n$. However, if the current graph topology $\mathcal{X}_k$ is not connected then the equality can hold, for a nonzero interval, whenever $\exists x_j,x_k$ and connected components $\mathcal{K}$ and $\mathcal{L}$ of $\mathcal{X}_k$ such that $x_j=\max(x_1,\ldots,x_n)$, $x_k=\min(x_1,\ldots,x_n)$, $j\in\mathcal{K}$, $k\in\mathcal{L}$ and consensus is achieved along $\mathcal{K}$ and $\mathcal{L}$.

Nonetheless, since $\sigma(t)$ generates a $\tau$-jointly connected graph within any time interval of length $\tau$, a graph $\mathcal{X}_{\hat{k}}$ will become active when there exists a node $\hat{j}$ adjacent to a node $\hat{i}$ such that $x_{\hat{j}}=\max\{x_1,\ldots,x_n\}$ and $x_{\hat{j}}>x_{\hat{i}}$ (a similar argument applies for a node $x_{\hat{k}}=\min\{x_1,\ldots,x_n\}$). Thus, for each $x\notin\ker \mathcal{Q}(\bar{\mathcal{X}})$ such that $\dot{V}=0$ and every time interval $[t,t+\tau]$ of length $\tau$ there exists a graph $\mathcal{X}_{\hat{k}}$, that will become active in $[t,t+\tau]$ such that $\dot{V}(x)<0$. Thus, by LaSalle's invariance principle~\cite{Khalil2002} it follows that $g(t)=V(x(t))$ will asymptotically converge to zero for every solution $x(t)$ of~\eqref{ConsDynamic}.
\end{proof}

\begin{example}
\label{Example2}
Consider a dynamic network composed of 10 vertices and 10 graphs. Let $\mathcal{X}_i$, $i\in\{1,\ldots,10\}$, be a graph with vertex set $\mathcal{V}(\mathcal{X}_i)=\{1,\ldots,10\}$ and edge set $\mathcal{E}(\mathcal{X}_i)=\{ij,ji\}$ such that $j=i+1(mod\ 10)$. The initial condition is $x(t_0)=\left[\begin{array}{cccccccccc}
0 & 5 & 3& 2&4&-9&10&5&-5&-3
\end{array}
\right]$. The evolution of the consensus algorithm \eqref{Eq:ConsFiniteMine} on the switched dynamic network $\mathcal{X}_{\sigma(t)}$ for two different switching signals $\sigma_1(t)=\lfloor t\rfloor (mod\ 10)+1$ and $\sigma_2(t)=\lfloor 100t\rfloor (mod\ 10)+1$ (where $\lfloor \bullet\rfloor$ denotes the floor function) is shown in Figure~\ref{FigEx2}-a) and Figure~\ref{FigEx2}-b), respectively. Notice that the switching signals as defined above generate $\tau$-jointly connected graphs with $\tau=10$ and thus consensus is achieved. Moreover, notice that $\sigma_2$ has a faster switching frequency than $\sigma_1$, thus the behavior of the network with $\sigma_2$ seems to be smoother.

\begin{figure}[t]
\centering
\includegraphics[width=13cm]{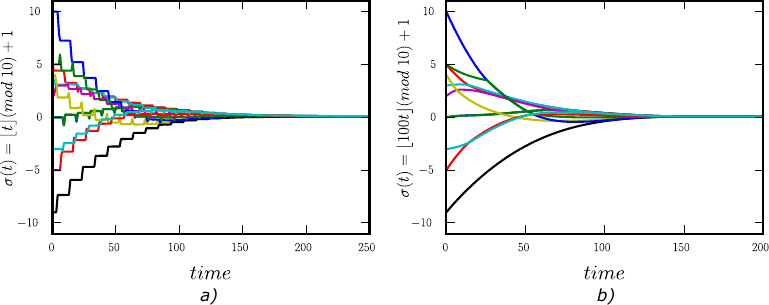}
\caption{Convergence of the consensus algorithm for Example~\ref{Example2} and switching signal generating a $\tau$-jointly connected graph.}
\label{FigEx2}
\end{figure}
\end{example}

\begin{corollary}
\label{CorollaryJoint}
Let $\mathcal{X}_{\sigma(t)}=\langle\mathcal{F},\sigma\rangle$ be a dynamic network, and $\tau$ a finite number such that within each time interval of length $\tau$, a strongly connected graph is active during a nonzero interval.
Consider a consensus algorithm defined by direction \eqref{Eq.NonlinearConsensusB} and a continuous function $f:\mathbb{R}\rightarrow\mathbb{R}$ such that the origin is a globally finite-time (respectively, fixed-time) stable equilibrium of the system $\dot{x}=-f(x)$.
Then, the consensus state is a globally finite-time (respectively, fixed-time) stable equilibrium of the consensus evolution~\eqref{ConsDynamic}.
\end{corollary}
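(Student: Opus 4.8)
The plan is to keep the Lyapunov function $V(x)=\max_i x_i-\min_i x_i$ of \eqref{Eq.LyapunoFunc} together with the identity $\dot V=-(|f(e_j)|+|f(e_k)|)\le 0$ (with $x_j=x_{\max}$, $x_k=x_{\min}$) from the proof of Theorem~\ref{Prop.AsympStability}, which is valid along \eqref{ConsDynamic} for \emph{every} active topology, connected or not. The hypothesis is stronger than $\tau$-joint connectivity — the active union over each window contains a genuinely connected graph — so Theorem~\ref{TheoremJoint} already delivers global \emph{asymptotic} convergence and the monotonicity of $V$. What remains is to upgrade this to finite-time (resp. fixed-time), and the only extra leverage is that, within every window, a connected graph is active on a contiguous interval whose length is at least the dwell time $\tau_{\min}$; on such an interval \eqref{ConsDynamic} coincides with the autonomous connected-graph dynamics $\dot x=F(-\mathcal Q(\mathcal X_r)x)$, which is finite-time (resp. fixed-time) stable by Theorem~\ref{Th.FixedFinite} under the homogeneity conditions satisfied by the admissible $f$. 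Throughout I use that both $F(-\mathcal Q(\mathcal X_r)\cdot)$ and $V$ are invariant under $x\mapsto x+c\mathbf 1$, so all the objects below descend to the quotient $\mathbb R^n/\mathrm{span}\{\mathbf 1\}$, on which $\{V=v\}$ is compact.

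For the finite-time item I would argue by a threshold. Let $\psi_r(t,x)$ denote the autonomous flow of the connected graph $\mathcal X_r$ and $T_r(x)$ its settling time; since this flow is finite-time stable, $T_r$ is continuous and vanishes on the consensus line, so there is $v^{*}>0$ with $T_r(x)\le\tau_{\min}$ whenever $V(x)\le v^{*}$ (uniformly over the finitely many $r$, by compactness of $\{V\le v^{*}\}$ modulo $\mathrm{span}\{\mathbf 1\}$). Hence any state with $V\le v^{*}$ at the onset of a connected interval is driven exactly to consensus before that interval ends. By the asymptotic convergence from Theorem~\ref{TheoremJoint}, $V(x(t))\le v^{*}$ from some finite $t^{*}$ onward; the connected interval occurring in the window that starts after $t^{*}$ (hence within additional time $\tau$) then reaches $V=0$, giving a finite settling time bounded by $t^{*}+\tau$.

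For the fixed-time item the obstruction is that $t^{*}$ above is not uniform in $x(t_0)$, since $V$ can stay large for a long time on disconnected intervals. I would remove it using the $+\infty$-limit: because $f$ is homogeneous in the $+\infty$-limit with degree $d_\infty>0$ and $\dot x=-f_\infty(x)$ is globally asymptotically stable, the autonomous connected system converges \emph{uniformly from infinity} (the $d_\infty>0$ ingredient behind Theorem~\ref{Th:Fixed}), so there is a single $\rho<\infty$, depending only on $\tau_{\min}$, with $V(\psi_r(\tau_{\min},x))\le\rho$ for \emph{all} $x$ and all connected $r$. Thus the connected interval of the \emph{first} window already forces $V\le\rho$ by time $\tau$, independently of the initial state. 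It then suffices to bound the number of windows needed to pass from $\{V\le\rho\}$ to consensus: writing $W(v):=\sup_r\sup_{V(x)=v}V(\psi_r(\tau_{\min},x))$ for the one-interval map, continuity and the strict decrease $W(v)<v$ on the compact range $[v^{*},\rho]$ yield $W(v)\le v-\eta$ for some $\eta>0$, while $W(v)=0$ for $v\le v^{*}$; since $V$ is non-increasing off the connected intervals, the successive window values obey $v_{k+1}\le W(v_k)$, so $V$ drops below $v^{*}$ in at most $\lceil(\rho-v^{*})/\eta\rceil$ windows and reaches $0$ in one more. The total time is then bounded by a constant $T_{\max}$ independent of $x(t_0)$, i.e. fixed-time convergence.

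The step I expect to be the main obstacle is precisely turning the pointwise estimate $\dot V\le 0$ into a usable decay rate: at any instant $\dot V$ sees only the extreme-node errors $e_j,e_k$, which can vanish on ``plateaus'' (several nodes tied at the maximum) even while $V$ is large, so no naive inequality $\dot V\le -c\,g(V)$ holds. Circumventing this forces one to reason over entire connected intervals through the flow maps $T_r$ and $W$, and to justify their finiteness, continuity and — for the fixed-time case — the uniform-from-infinity bound $\rho$, all of which rest on the homogeneity (resp. bi-limit) structure of $f$ and on the compactness of $\{V=v\}$ after quotienting by the consensus direction $\mathrm{span}\{\mathbf 1\}$.
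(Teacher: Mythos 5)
The paper states Corollary~\ref{CorollaryJoint} without any proof (it is followed immediately by the Benchmark section), so there is no argument of the authors to compare yours against; what follows is an assessment of your proposal on its own terms. Your two-stage strategy is sound and is, to my reading, the natural way to fill the gap: the hypothesis implies $\tau$-joint connectivity, so Theorem~\ref{TheoremJoint} gives asymptotic convergence and monotonicity of $V$; each window contains a contiguous interval on which a connected graph is active and the switched system coincides with the autonomous connected flow (with the minor technicality that this activation interval may straddle a window boundary, so its guaranteed length inside one window is only a fixed fraction of $\tau_{\min}$ --- this changes constants, not the argument); finite-time convergence then follows from the threshold $v^{*}$ built from the settling-time functions $T_r$, and fixed-time convergence from the uniform-from-infinity bound $\rho$ together with the window-counting via the one-interval map $W$. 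You are also right about the central obstacle: no inequality of the form $\dot V\le -c\,g(V)$ holds because $\dot V$ only sees the extremal errors and can vanish on plateaus, which is precisely why one must integrate over whole connected intervals through the flow maps rather than argue pointwise.

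The one substantive caveat is that your argument does not follow from the corollary's hypotheses as literally stated. The corollary assumes only that the scalar system $\dot x=-f(x)$ is finite-time (resp.\ fixed-time) stable; it does not assume homogeneity. Every step where you go beyond asymptotic convergence --- finite-time stability of the connected-graph closed loop via Theorem~\ref{Th.FixedFinite}, continuity of the settling times $T_r$ (false for general finite-time stable systems, true for homogeneous vector fields of negative degree), and above all the uniform bound $V(\psi_r(\tau_{\min},x))\le\rho$ for all $x$ (which rests on a bi-limit homogeneous Lyapunov function satisfying $\dot W\le -cW^{\beta}$ with $\beta>1$ for large $W$) --- requires the homogeneity / bi-limit conditions of Theorem~\ref{Th.FixedFinite}, not merely finite/fixed-time stability of the scalar dynamics. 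You acknowledge this by invoking ``the homogeneity conditions satisfied by the admissible $f$,'' and the corollary should indeed be read with those conditions imported from Theorem~\ref{Th.FixedFinite}; but strictly speaking either the statement must be strengthened or these properties must be justified independently. With that hypothesis made explicit, and the secondary claims ($W(v)<v$ via the LaSalle step of Theorem~\ref{Prop.AsympStability}, continuity of the flow in initial conditions, compactness of level sets after quotienting by $\mathrm{span}\{\mathbf 1\}$) written out, your proof is correct.
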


\begin{remark}
In this paper, the analysis has been focused on the class of protocols that follow direction~(\ref{Eq.NonlinearConsensusB}), however, a similar analysis can be performed for the class of protocols that follow direction~(\ref{Eq.NonlinearConsensusA}), by using the same candidate Lyapunov function~\eqref{Eq.LyapunoFunc}.
\end{remark}

\section{Benchmark: Convergence time vs Graph connectivity}
\label{Sec.IllustrativeExample}

In this section, experiments are performed to evaluate the convergence time of a network's closed-loop system under different consensus algorithms. In particular, it is investigated how the convergence time increases when the graph's algebraic connectivity decreases. 

\subsection{Description and Motivation}

The motivation of this work is to analyze a class of algorithms that may work in a wide range of (possible unanticipated) situations. Imagine for instance a company developing low-power nodes of a sensor network, which must achieve consensus to provide an output sensing value, and whose interest is in enabling users to apply its solution in either small or large networks with minimum additional configurations. For a given topology, and assuming that a bound is known for the initial consensus error (a realistic assumption in sensor consensus), the gains of any consensus protocol can be adjusted to obtain a proper convergence (or settling) time. However, an important desired property of the implemented consensus algorithm is that the convergence time is maintained within an acceptable range, without the need of additional configuration, when the network's connectivity changes by either the connection or disconnection of sensors. This property is investigated in this benchmark, by comparing the convergence time of different protocols when the algebraic connectivity changes.

In detail, we compare algorithms based on the direction~\eqref{Eq.NonlinearConsensusB} against existing finite-time and fixed-time consensus algorithms for dynamics networks that were designed following direction~\eqref{Eq.NonlinearConsensusA}. Generally, the convergence time of a consensus algorithm grows when the algebraic connectivity of the graph decreases, which occurs when the network size increases. However, it will be shown that such increment in the convergence time is slower in nonlinear algorithms based on the direction~\eqref{Eq.NonlinearConsensusB} than in algorithms based on the direction~\eqref{Eq.NonlinearConsensusA}. Thus, the analyzed direction \eqref{Eq.NonlinearConsensusB} can be applied, with the same parameters selection, to graphs with either high or low algebraic connectivity, still achieving consensus in a satisfactory amount of time.

\subsection{Methodology}

The next methodology was used to benchmark the direction \eqref{Eq.NonlinearConsensusB} in two experiments that illustrate how the convergence time of each algorithm increases as the algebraic connectivity decreases. To this aim, switched networks are generated in such a way that the algebraic connectivity decreases as the number of nodes increases. For this, circular undirected graphs of $n$ nodes are defined, which are denoted by $\mathcal{C}_n$, satisfying $\lambda_2(\mathcal{C}_n)=2-2\cos\left(2\pi/n\right)$ (where $\lambda_2(\bullet)$ denotes the second eigenvalue of the Laplacian of the argument network).

\begin{itemize}
\item In the first experiment, the finite-time consensus algorithms of Table~\ref{Tab:DiferentConsensus} are compared. Namely, the algorithm~\eqref{Eq.OtherFTConsesusAlg}, proposed in~\cite{Wang2010}, versus the algorithm~\eqref{Eq:ConsFiniteMine}, which applies the same nonlinear function of \eqref{Eq.OtherFTConsesusAlg} but following direction~\eqref{Eq.NonlinearConsensusB}.

\item In the second experiment, the fixed-time consensus algorithms of Table~\ref{Tab:DiferentConsensus} are compared. Namely, the algorithm \eqref{Eq.FixedConsesusAlg}, proposed in~\cite{Zuo2014}, versus the algorithm~\eqref{Eq:ConsFixedMine}, which applies the same nonlinear function of \eqref{Eq.FixedConsesusAlg} but following direction~\eqref{Eq.NonlinearConsensusB}.

\item A dynamic network is considered, described by two undirected graphs of $n$ nodes, $\mathcal{X}_0$ and $\mathcal{X}_1$, where $\mathcal{X}_0$ is such that $(i,j)\in\mathcal{E}(\mathcal{X}_0)$ if and only if $j-i\equiv \pm 1 (mod\ n)$ and $\mathcal{X}_1$ is such that $(i,j)\in\mathcal{E}(\mathcal{X}_0)$ if and only if $j-i\equiv \pm h (mod\ n)$, where $h=max(\{h\in\{1,\ldots,\lfloor n/2\rfloor\}| n\ (mod\ n/2)\equiv 1\})$. The switching signal is given by $\sigma(t)=\lfloor 5t\rfloor(mod\ 2)$. An example of these graphs for the case of a graph with $n=25$ nodes is illustrated in Figure~\ref{Fig:ExampleGraph}.

\begin{figure}
\centering
\def\svgwidth{10cm}
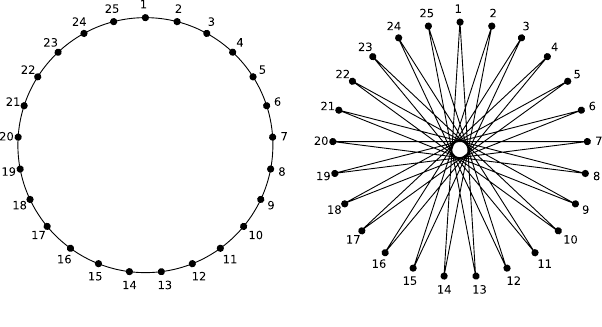
\caption{Example of the undirected switching graph  $\mathcal{X}_{\sigma(t)}$, $\sigma(t)\in{0,1}$ of 25 nodes used for benchmarking.}
\label{Fig:ExampleGraph}
\end{figure}

\item The initial conditions are set equally for the different algorithms using the linear congruential generator~\citep{Brunner1999},
$$
z_{i+1}=rz_i+s\ \ (\text{mod}\ M) \ \ n, \ \
x_i(t_0)=l\dfrac{z_i}{M}-m, \ \ \ i=1,\ldots,n
$$
such that $z_0=M$ and $r=45$, $s=1$, $M=1024$, $l=20$ and $m=10$ and $n$ is the number of nodes in the graph. This iterative procedure produces a pseudo--random sequence of initial conditions $x_i(t_0)$ in the interval $[-10,10]$.

\item The exponents of the consensus protocols are set equal, $\alpha=0.5$ for~ the first experiment and $q=\frac{3}{2}$, $p=\frac{1}{2}$ for the second experiment. Additionally, the gains are experimentally set (for the second experiment $k=k_1=k_2$) such that in a network of $25$ nodes, both algorithms achieve $V(x)=0.05$ at $1.00s$.

\item To measure the control effort of each approach, the Integrated Squared Control Effort (\textit{ISCE}) of the network is computed as $$E_{tot}(t)=\sum_{i=1}^n{E_i}(t)\text{, where }E_i(t)=\left(\int_{t_0}^{t}u_i^2\right)^{\frac{1}{2}}.$$

\item Experiments are performed varying from 25 to 1000 nodes. The convergence time and the \textit{ISCE} of each test are compared.

\item The simulations are performed in OpenModelica\textsuperscript{\textregistered} using Euler's integration method with interval $0.0001s$.
\end{itemize}

\subsection{Results}

The results for the first experiment, comparing the finite-time consensus algorithms in Table~\ref{Tab:DiferentConsensus}, is presented in Figure~\ref{ComparisonPlotsF}~a). It is important to highlight that, even if both algorithms achieve $V=0.05$ at time $t_f=1$ for $n=25$, the \textit{ISCE} of~\eqref{Eq.OtherFTConsesusAlg} is $E_{tot}(t_f)=361.31$ whereas the \textit{ISCE} of the proposed method~\eqref{Eq:ConsFiniteMine} is $E_{tot}(t_f)=273.57$.

The results for the second experiment, comparing the fixed-time consensus algorithms in Table~\ref{Tab:DiferentConsensus}, are presented in Figure~\ref{ComparisonPlotsF}~b). The \textit{ISCE} of \eqref{Eq.FixedConsesusAlg} in a network of 25 nodes is $E_{tot}(t_f)=616.15$, while the \textit{ISCE} of \eqref{Eq:ConsFixedMine} for the same network is $E_{tot}(t_f)=588.15$.

The results of these experiments suggest, first that the \textit{ISCE} required to achieve consensus at a given time is lower by following the direction \eqref{Eq.NonlinearConsensusB}; second, that the convergence time growing with the decreasing of the algebraic connectivity is significantly slower with the algorithms based on the direction \eqref{Eq.NonlinearConsensusB} than with the finite/fixed time algorithms of~\cite{Wang2010,Zuo2014} based on the direction \eqref{Eq.NonlinearConsensusA}. As notice in Table~\ref{Tab:Comparison}, previous results on finite-time and fixed-time consensus algorithms obtained following direction~\eqref{Eq.NonlinearConsensusB} does not justify the convergence to the consensus state in this example, since those results are restricted to static networks.

\begin{figure}
\centering
\def\svgwidth{12cm}
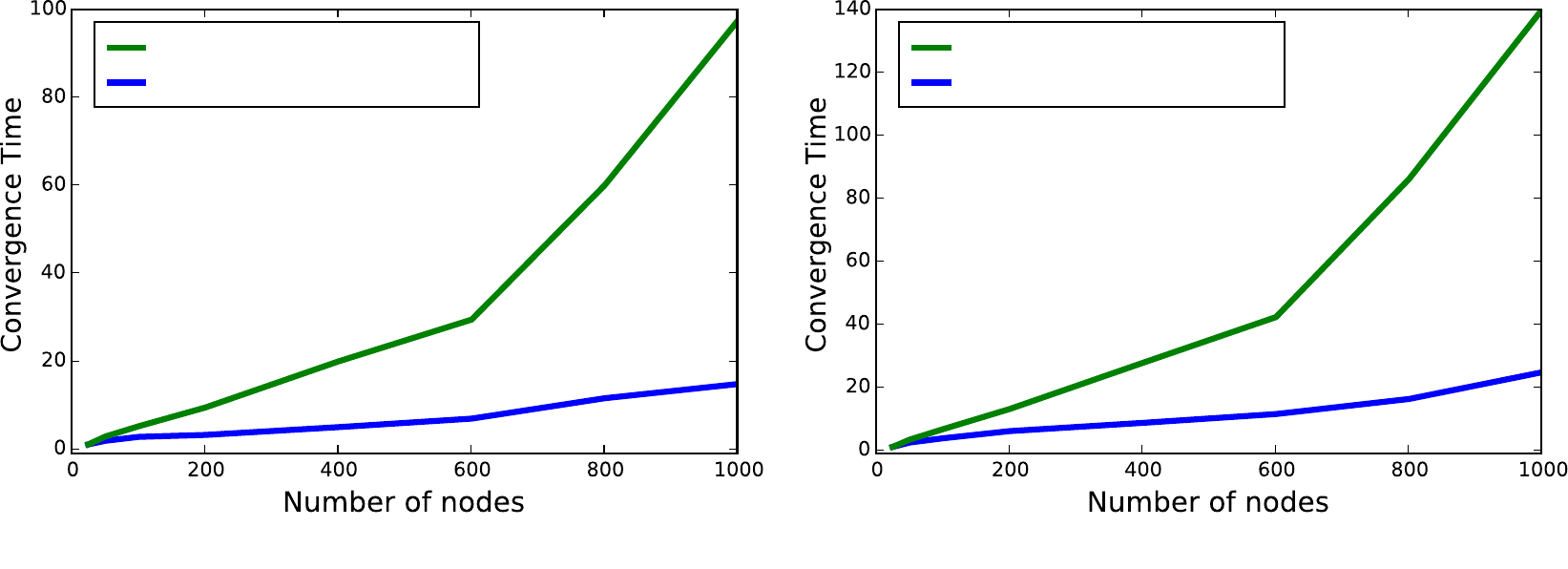
\caption{Benchmark comparing the convergence as the number of nodes in a network increases. a) First experiment: Finite-time consensus algorithms \eqref{Eq.OtherFTConsesusAlg}  vs  \eqref{Eq:ConsFiniteMine}. b) Second experiment: Fixed-time consensus algorithms~\eqref{Eq.FixedConsesusAlg} vs \eqref{Eq:ConsFixedMine}. Algorithms \eqref{Eq.OtherFTConsesusAlg} and \eqref{Eq.FixedConsesusAlg} follow direction \eqref{Eq.NonlinearConsensusA}, whereas \eqref{Eq:ConsFiniteMine} and \eqref{Eq:ConsFixedMine} follow direction \eqref{Eq.NonlinearConsensusB}.}
\label{ComparisonPlotsF}
\end{figure}

\section{Conclusions and Future Work}
\label{Sec.Conclusions}
n this work, a class of consensus algorithms for dynamic networks with finite/fixed-time convergence were analyzed by using homogeneity theory and switching stability theory. In particular, it was shown that the analyzed class, identified as direction \eqref{Eq.NonlinearConsensusB}, in which a nonlinear function of the consensus error is evaluated per each node, achieves finite/fixed-time consensus even if the communication topologies are disconnected. This feature is an essential advantage concerning other finite-time consensus algorithms that require that the sum of the time intervals for which the topology is connected be sufficiently large. Thus, the analyzed class allows the application of finite/fixed-time consensus algorithms with intermittent connections.

Among the advantages of the analyzed consensus algorithms over other previously proposed finite/fixed-time consensus algorithms for dynamic networks, the analyzed algorithms are computationally simpler, use lower control effort to achieve consensus at a given time and have slower growth in the convergence time as the algebraic connectivity decreases.

Future work concerns the analysis of the considered consensus class under noisy measurements as well as the implementation of its discrete version over robotic swarms. Moreover, the extension to high-order agents will be studied.

%\bibliographystyle{apacite}
%\bibliography{biblio}

\end{document}